\theoremstyle{definition}
\newtheorem{definition}{Definition}
\newtheorem{theorem}{Theorem}
\newtheorem{lemma}[theorem]{Lemma}
\newtheorem{prop}[theorem]{Proposition}
\newtheorem*{prop*}{Proposition}
\patchcmd{\subequations}{}%
{}{}{}
\newcommand{\leqnomode}{\tagsleft@true}
\newcommand{\reqnomode}{\tagsleft@false}
\newcommand{\CEIL}[1]{\left\lceil #1\right\rceil}
\newcommand{\BIGP}[1]{\left( #1\right)}
\newcommand{\BIGBP}[1]{\left\{ #1\right\}}
\newcommand{\BIGC}[1]{\left| #1\right|}
\newcommand{\OP}[1]{\operatorname{#1}}
\newcommand{\MC}[1]{{\mathcal #1}}
\newcommand{\MBB}[1]{{\mathbb #1}}
\long\def\longdelete#1{}
\title{\bf Iterative Partial Rounding for Vertex Cover \\ with Hard Capacities}
\author[ ]{Mong-Jen Kao}
\affil[ ]{Institute of Information Science,}
\affil[ ]{Academia Sinica, Taiwan.}
\affil[ ]{\small mong@iis.sinica.edu.tw}
\date{}
\begin{document}

\begin{titlepage}

\maketitle
\thispagestyle{empty}

\begin{abstract}
We provide a simple and novel algorithmic design technique, for which we call \emph{iterative partial rounding}, that gives a tight rounding-based approximation for vertex cover with hard capacities (VC-HC).
%
In particular, 
%
		we obtain an $f$-approximation for VC-HC on hypergraphs, improving over a previous results of 
		Cheung et al. (SODA 2014) to the tight extent.
		This also closes the gap of approximation 
		since it was posted by Chuzhoy and Naor in (FOCS 2002).
%
%
		%
%
We believe that our rounding technique is of independent interest when hard constraints are considered.

\smallskip

Our main technical tool for establishing the approximation guarantee is a separation lemma that certifies the existence of a strong partition for solutions that are basic feasible in an extended version of the 
natural LP.
\end{abstract}

\bigskip

\begin{minipage}{0.95\textwidth}
\textbf{Keywords:} iterative partial rounding, capacitated vertex cover, hard capacities, approximation algorithm
\end{minipage}

\end{titlepage}

\section{Introduction}

We consider the vertex cover problem with hard capacity constraints (VC-HC) on hypergraphs.
In this problem, we are given a hypergraph $G=(V,E)$ with a maximum edge size $f$.
Each $e \in E$ is associated with an edge demand $d_e$ and each $v \in V$ is associated with a capacity $c_v$ and an available multiplicity (the number of available copies) $m_v$.
The objective is to find a minimum-size vertex cover, a vertex multi-set represented by an assignment function, such that the demands of the edges can be covered by the capacities of the vertices chosen in the multiset and the multiplicity of each vertex does not exceed its available multiplicity.
When no upper bound is imposed on the multiplicities of each vertex, i.e., $m(v) = \infty$ for all $v$, this problem is then referred to as (soft) capacitated vertex cover (CVC).
In this case weighted vertex set is usually considered and minimum-weight vertex multi-set is sought.
In this paper, we assume that VC-HC takes unweighted vertices unless specified otherwise.

\smallskip

The study of VC-HC was initiated in the notable work of Chuzhoy and Naor~\cite{1151271}, where normal graphs with unit edge demand are assumed, i.e., $f=2$ and $d_e=1$ for all $e$.
Under this setting, they established a surprising result that, while 
this setting admits constant factor approximations, it becomes set-cover hard when $\{0,1\}$-weighted vertices are considered, i.e., the weight of each vertex can be either $0$ or $1$.
This implies an interesting logarithmic separation on the approximability 
between weighted and unweighted vertices.
%
%
%
In the same work, the status of (unweighted) VC-HC with general edge demand
was left as an open problem.
While the gap of approximation for this problem was settled partially a decade later by Saha and Khuller~\cite{Saha:2012:SCR:2359332.2359443} and Cheung~et~al~\cite{doi:10.1137/1.9781611973402.124},
the exact approximability of this problem remains unsettled.

\smallskip

For a brief introduction on the current status of VC-HC,
Chuzhoy and Naor~\cite{1151271} presented a $3$-approximation for normal graphs with unit edge demand.
This result was later improved by Gandhi~et~al~\cite{Gandhi:2006:IAA:1740416.1740428} to a tight $2$-approximation with a refined approach.
Saha and Khuller~\cite{Saha:2012:SCR:2359332.2359443} considered general edge demands and 
presented an $O(f)$-approximation for hypergraphs.
This result was improved by Cheung~et~al~\cite{doi:10.1137/1.9781611973402.124} with a $\BIGP{1+2/\sqrt{3}} \approx 2.155$-approximation for normal graphs and a $2f$-approximation for hypergraphs.

\smallskip

One intriguing thing in the development of VC-HC is how the techniques that were used to solve this problem are influenced (constrained)
by the complexity separation between weighted and unweighted vertex sets.
%
%
Therefore, if an approach were to work, it has to be sensitive enough to tell the difference between the assumptions.
This nature, as also pointed out in~\cite{doi:10.1137/1.9781611973402.124}, renders existing techniques for CVC, such as primal-dual~\cite{989540}, dependent rounding~\cite{Gandhi200455}, LP rounding, etc., not directly applicable to VC-HC since very often they are not sensitive to the weight of the vertices.

\smallskip

In fact, {all} of existing results for VC-HC are built on the same two-staged rounding principle:
First, it begins with a vertex-side threshold rounding.
Then one or multiple edge-vertex patching procedure is introduced to meet the covering guarantee.
The main challenge of this approach has been on devising a delicate patching procedure.
This has been demonstrated in the 
the works~\cite{Saha:2012:SCR:2359332.2359443,doi:10.1137/1.9781611973402.124}.
%
In particular, a near-tight $2.155$-approximation for normal graphs is obtained, using an elegant matching structure extracted during the patching stage, combined with a neat interaction back to the threshold parameter of the first stage.
Although improved approximations are obtained,
it seems that current two-staged rounding techniques have reached their limitations,
and significant new ideas are required to close the gap.

\medskip

%
Therefore, a natural and central question that arises is thus:
\begin{quote}
\emph{Can the rounding for VC-HC be done without patching?}
\end{quote}
In this work, we provide a positive answer to the above question:
We present a novel rounding-based approximation algorithm for VC-HC which closes the gap of approximation since it was posted a decade ago.
Compared to prior results, our algorithm is very simple to state and easy to understand.
We believe that our rounding technique is also of independent interest when hard constraints are considered.
%

%


%

\subsection{Our Results}

%
Our approximation algorithm is stated as follows.

\begin{theorem} \label{thm-informal-tight-vc-hc}
There is a rounding-based approximation algorithm for VC-HC that, given an instance with largest size of hyperedges $f \ge 2$, produces an $f$-approximation in polynomial-time.
\end{theorem}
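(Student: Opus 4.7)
The plan is to prove Theorem~\ref{thm-informal-tight-vc-hc} via the \emph{iterative partial rounding} framework announced in the abstract, built on top of an LP relaxation and a structural separation lemma for its basic feasible solutions. I would first write down an extended LP relaxation of VC-HC with the usual variables $x_v$ for vertex multiplicity and $y_{e,v}$ for the amount of demand $d_e$ assigned to vertex $v$ in edge $e$, with covering constraints $\sum_{v\in e} y_{e,v}\ge d_e$, capacity constraints $\sum_{e\ni v} y_{e,v}\le c_v x_v$, and multiplicity constraints $x_v\le m_v$. The extension will augment this LP with auxiliary valid inequalities (e.g.\ aggregated capacity/flow-type inequalities across vertex neighborhoods) so that basic feasible solutions carry enough combinatorial regularity for the separation step. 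One then solves this LP and takes a basic feasible optimum.

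The heart of the argument is the separation lemma, which asserts that any basic feasible solution $(x^*,y^*)$ admits a \emph{strong partition} of the current support into two parts $A$ and $B$ with the following properties: on $A$, the LP mass is dense enough (in a per-edge sense) that multiplying the fractional multiplicities by $f$ and rounding up yields an integral assignment whose cost is within a factor $f$ of $\sum_{v\in A}x^*_v$, while on $B$ the residual demands can be handled by restricting to $B$ and iterating. The strong partition must simultaneously guarantee (i) that for every edge whose demand is fully charged to $A$ the rounded integral capacities on $A$ can still absorb that demand without exceeding any $m_v$, and (ii) that for every edge with demand still routed through $B$, the induced restriction is feasible for a smaller extended LP of the same form.

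Once the separation lemma is in place, the rounding procedure is straightforward to describe: repeatedly solve the extended LP on the residual instance, apply the separation lemma, commit the integral rounding on $A$, deduct the demand so-covered from each incident edge, and recurse on $B$. Each iteration decreases the size of the fractional support (since the variables on $A$ are fixed integrally and at least one vertex, by a basic-solution counting argument, can always be placed into $A$), giving polynomial-time termination. The cost charged in each round is at most $f$ times the LP mass eliminated in that round, so summing over all rounds gives a total integral cost at most $f\cdot\mathrm{OPT}_{\mathrm{LP}} \le f\cdot\mathrm{OPT}$.

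The main obstacle I expect is establishing the separation lemma itself, and in particular making the partition strong enough that the residual instance on $B$ stays LP-feasible after $A$ is rounded and its capacity is consumed. Proving this requires a careful vertex-counting argument on the extended LP: one needs to show that at any basic feasible point, the tight linearly independent constraints force the existence of a vertex set $A$ whose fractional multiplicities are at least $1/f$ of what an integral cover would demand for the edges assigned to it, while leaving the contributions of other vertices unchanged on edges that remain in $B$. Getting the extended LP just right — strong enough to admit such a partition yet simple enough that the residual instance is again an extended LP on a smaller ground set — is the delicate ingredient that makes the iteration close the $f$-gap tightly.
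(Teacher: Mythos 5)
Your proposal is a plan rather than a proof, and the plan diverges from what actually works in two ways that matter. First, the ``extension'' of the LP is not some collection of aggregated flow-type valid inequalities: the one extra ingredient is the vertex-cover constraint $h_{e,v}\le x_v$ (equivalently $y_{e,v}\le d_e x_v$), without which the relaxation has unbounded integrality gap already on a single edge, and with which a tight structure becomes available. You never identify this constraint, yet everything downstream depends on which constraints can be tight at a basic solution. Second, and more seriously, your separation lemma --- a partition of the support into $A$ and $B$ such that $A$ can be rounded up integrally within the hard multiplicities \emph{and} the residual instance on $B$ remains feasible --- is essentially the whole difficulty of the problem restated as an assumption. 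With hard capacities, committing integral capacity on a subset and recursing on the rest is exactly the ``threshold rounding plus patching'' paradigm whose limitations motivate the paper; there is no reason a basic feasible solution of the natural (or mildly strengthened) LP admits such a partition, and you give no argument for it. Your progress measure (``at least one vertex can always be placed into $A$'') and your cost telescoping across rounds are likewise asserted, not proved.

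The correct architecture is different in kind: the iterations never commit any integral value. Each iteration only \emph{folds} an edge $e$ that supports a vertex $v$ with $x_v\ge 1/f$ (meaning $0<h_{e,v}=x_v$, which forces $d_e\le c_v$): the edge is removed, $c_v$ is decreased by $d_e$, and the \emph{fractional} lower bound $\ell_v=x_v$ is imposed. This keeps every refined LP feasible with non-increasing optimum, and terminates in $O(|E|)$ rounds because an edge disappears each time. Only when no vertex outside $I=\{v: 0<x_v<1/f\}$ is supported does one round up \emph{all} of $\bm{x}$ at once. The separation lemma that powers the analysis is then a statement about the final basic solution: since $I$ is not supporting and $D=\{v:1<x_v<m_v\}$ is not supported, a linear-independence count on the tight constraints yields, for each $v\in I$, a witness active subedge $\Gamma(v)\ni v$ avoiding $D$, with these subedges pairwise disjoint outside $I$; the edge constraint $\sum_u h_{e,u}=1$ then forces enough fractional mass on $\Gamma(v)\setminus I$ to absorb the round-up cost within a factor $f$. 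Your $A$/$B$ recursion neither produces nor uses this structure, so as written the argument does not go through.
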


%

%


%
Our algorithm is driven by a simple \emph{iterative partial rounding} scheme.
In each iteration, it makes partial decisions based on current working LP and rounds the vertices \emph{fractionally}.
When no such decisions are there to be made, it rounds up all vertices \emph{unconditionally} and stops.

\smallskip

During this process, the input instance along with the working LP are modified gradually.
While the edges may be removed (folded) from the instance as the algorithm iterates, the lower bounds on the multiplicity of the vertices are meanwhile strengthened.
The intermediate solutions to the working LPs may not be feasible for the original input instance.
However, we guarantee that the overall optimality is not lost before the final rounding is done, and all the partial decisions we made entail no potential loss in the feasibility and approximation guarantee of the final solution.
Together this gives our tight approximation for this problem.

\smallskip

In contrast to previous two-staged approaches, instead of making one big decisive rounding followed with patching, our algorithm makes only partial indecisive moves which are also proven harmless and waits before the final decisive rounding can be made.
From this point of view, we also believe that our rounding scheme is of independent interest when other problems with hard constraints are considered.

\smallskip

Our technical tool for establishing the approximation guarantee for the final rounding step is a separation lemma that certifies the conditional existence of a strong partition on the vertices, given by solutions that are basic feasible for the LPs our algorithm is working with.
%
%
%
%
The presented separation lemma can be seen as a recast of the well-established Carath\'{e}odory's theorem~\cite{Schrijver:1986:TLI:17634} in the language of VC-HC polytope.
Intuitively, it says that: A vertex-to-edge correspondence with strong separation guarantee exists at places where 
a certain type of linear constraint is inactive.
Similar ideas have been used, e.g.,\cite{doi:10.1137/1.9781611973402.124}, in the usage of the multi-set multi-cover polytope~\cite{Saha:2012:SCR:2359332.2359443}, which by formulation is a vertex-to-vertex matching structure, to extract a vertex-to-vertex matching.
When the VC-HC polytope is considered, however, we have an essentially different entangling relation to deal with, in particular, relation between the vertices $V$, the edges $E$, and their cross-products $E\times V$. 
Our separation lemma provides an interpretation from the perspective of Carath\'{e}odory's theorem to this situation:
By properly removing the entangling $E \times V$ relation,
a strong and meaningful structural guarantee can be extracted for our rounding approach.

\smallskip

The rest of this paper is organized as follows.
In Section~\ref{Preliminary} we formally define VC-HC and the extended LP relaxation we will be using throughout this paper.
In Section~\ref{sec-VC-constraint} we introduce the key notion of edge-folding that motivates our iterative rounding approach.
We present our tight approximation algorithm in Section~\ref{sec-iterative-partial-rounding-VC-HC} and establish the approximation guarantee in Section~\ref{sec-approx-ratio}.
%

%



\subsection{Further Related Work}
\label{sec-related-work}
%

In the following we briefly summarize other related results for vertex cover (VC), CVC, and VC-HC with relaxed constraints.

%
%

%
%

%
%
%
%

\vspace{-8pt}

\paragraph{Soft capacitated covering.}

For vertex cover, it is known that a $f$-approximation can be obtained by both LP rounding and LP duality~\cite{BarYehuda1981198,doi:10.1137/0211045} for hypergraphs.
%
Khot and Regev~\cite{Khot2008335} showed that, by assuming the unique game conjecture (UGC), approximating this problem to a ratio better than $f - \epsilon$ is NP-hard for any $\epsilon > 0$ and $f \ge 2$.

\smallskip

%
%
The soft capacitated vertex cover problem was first introduced in the notable work of Guha~et~al~\cite{989540},
in which a $2$-approximation was presented using primal-dual approach.
This result extends to $f$-approximation for hypergraphs.
Gandhi et al.~\cite{1181955} provided another $2$-approximation via dependent rounding.
%
%
%
%
%
Kao~et~al~\cite{MJKHLCDTL13,springerlink:10.1007/s00453-009-9336-x,MJKaoDissertation12} considered capacitated dominating set, an alternative notion of capacitated covering, and presented a series of results studying the complexity and approximability of this problem when different classes of graphs are considered.
%
Special cases and variations of this problem were also considered independently~\cite{springerlink:10.1007/978-3-642-13731-08,DBLP:conf/iwpec/DomLSV08,Liedloff:2010:SCD:1939238.1939250}.
%

%
%
%
%
%

\vspace{-8pt}

\paragraph{VC-HC with relaxed constraints.}
For partial VC-HC, which aims at covering a given number of edges, Cheung~et~al~\cite{doi:10.1137/1.9781611973402.124} provided a $(1+\epsilon)(2f+2)$-approximation, based on a reduction 
to $(f+1)$-hypergraphs and their $2f$-approximation for VC-HC presented in the same work.

\smallskip

Gandhi~et~al~\cite{Gandhi:2006:DRA:1147954.1147956} considered weighted VC-HC with relaxed multiplicity constraints and presented an augmented $(2,2)$-covering\footnote{An augmented $(\alpha,\beta)$-covering is a solution that violates the multiplicity limit by a factor $\alpha$ and has a cost factor $\beta$ to the optimal LP solution for VC-HC.} for normal graphs.
Grandoni et al.~\cite{DBLP:journals/siamcomp/GrandoniKPS08} considered unit vertex multiplicity, i.e., $m(v)=1$, and presented an augmented $(f,f^2)$-covering for hypergraphs.
%
These results were improved by Kao et al.~\cite{KaoTuLee15} to augmented $\vphantom{ {\dfrac{}{}}^{\frac{T}{T}}} \big( s, (1+\frac{1}{s-1})(f-1)\big)$-covering for any $s \ge 2$, using an extended dual-fitting approach.

\subsection{Other hard-capacitated problems.}

In recent years, significant progress have been made on hard-capacitated problems, including VC-HC, Capacitated facility location (CFL), Capacitated $k$-center, and $k$-median.
We summarize the current progress and recent breakthroughs for the respective problems in Table~\ref{table-summary-capacitated-problems}, using a condensed format.
We also refer the reader to the references therein for further details.

\smallskip

{
\begin{table*}[htp]
	\renewcommand{\arraystretch}{1.2}
	\setlength{\dashlinegap}{2pt}
	\setlength{\dashlinedash}{2pt}
	\newcolumntype{C}[1]{>{\centering\arraybackslash}m{#1}}
	\newcolumntype{\Empty}{@{}m{0pt}@{}}
	\begin{tabular}{ | c | @{} C{4.2em} @{} | @{} C{4.2em} @{} | @{} C{10em} @{} | c \Empty |}
		\hline
		%
		\multirow{5}*[2pt]{ VC }
		&
		\multirow{5}*[-4pt]{
			\begin{tabular}{c}
				open \\ cost
			\end{tabular}
		}
		&
		CVC
		&
		\multicolumn{2}{l}{ \enskip $f$-approx via primal-dual~\cite{989540} }
		& \\[8pt]\cline{3-4}\cdashline{5-5}
		& &
		\multirow{4}{*}{ VC-HC }
		&
		weighted & unweighted
		& \\\cdashline{4-5}
		& & &
		set-cover-hard~\cite{1151271} 
		& 
		\begin{tabular}{c}
			$f$-approx via \\[-3pt]
			\emph{iterative partial rounding}
		\end{tabular}
		& \\[20pt]\cdashline{4-4}\cline{5-5}
		%
		\multirow{1}*[2pt]{( Set Cover )} & & &
		\multicolumn{2}{l}{ \enskip logarithmic approx.~\cite{Wolsey1982,1151271} }
		& \\[8pt]\hline
		%
		%
		%
		\multirow{3}*{ 
			\begin{tabular}{c}
				Facility \\ Location
			\end{tabular}
		}
		&
		\multirow{3}*[-4pt]{
			\begin{tabular}{c}
				open \\
				+ \\
				assign
			\end{tabular}
		}
		&
		UFL
		&
		\multicolumn{2}{l}{
		\begin{tabular}{l}
			$1.488$-approx~\cite{Li:2011:AAU:2027223.2027230}, \\
			NP-hard to approx within $(1.463-\epsilon)$~~\cite{Guha:1998:GSB:314613.315037}
		\end{tabular}
		}
		& \\[20pt]\cline{3-5}
		& &
		CFL
		&
		\multicolumn{2}{c}{
		\begin{tabular}{l \Empty}
			$5$-approx via local search~\cite{Bansal:2012:CFL:2404160.2404173}, & \\[0pt]
			$O(1)$-approx, LP-rounding via MFN-relaxation~\cite{DBLP:conf/focs/AnSS14}
		\end{tabular}
		}
		& \\[24pt]\hline
		%
		%
		%
		\multirow{2}*[-2pt]{ $k$-center }
		&
		\multirow{4}*[-8pt]{ 
			\begin{tabular}{c}
				assign \\
				cost
			\end{tabular}
		}
		&
		\multirow{2}*[-4pt]{ $\max \min$ }
		&
		\multicolumn{2}{l}{ \quad	$2$-approximation~\cite{Hochbaum:1985:BPH:2775965.2775967} }
		& \\[6pt]\cline{4-5}
		%
		& & &
		\multicolumn{1}{c:}{ Hard-capd. }
		&
		$9$-approximation~\cite{DBLP:journals/mp/AnBCGMS15}
		& \\[6pt]\cline{1-1}\cline{3-3}\cline{4-5}
		%
		%
		\multirow{3}*[-2pt]{ $k$-median }
		& &
		\multirow{3}*{ avg }
		&
		\multicolumn{2}{c}{ 
			\begin{tabular}{l}
			uncapacitated, \\
			$(2.675+\epsilon)$-approx via pseudo-approx~\cite{Byrka:2015:IAK:2722129.2722179,Li:2013:AKV:2488608.2488723}
			\end{tabular} }
		& \\[10pt]\cline{4-5}
		%
		%
		& & &
		\multicolumn{2}{c}{ 
			\begin{tabular}{l}
				uniform hard-capacitated, \\
				$e^{O(1/\epsilon^2)}$-bi-approx, using $(1+\epsilon)\cdot k$ facilities~\cite{Li:2015:UCK:2722129.2722176}
			\end{tabular} }
		%
		& \\[24pt] \hline
	\end{tabular}
	\caption{A summary on related hard-constrained problems.}
	\label{table-summary-capacitated-problems}
\end{table*}
}

\section{Problem Statement and LP Relaxation}
\label{Preliminary}

In this section we define VC-HC formally and introduce the extended LP relaxation we will be using in this paper.

\smallskip

We begin with basic graph notations.
Throughout this paper, we use $G=(V,E)$ to denote a hypergraph with vertex set $V$ and edge set $E \subseteq 2^V$. Note that, each hyperedge $e \in E$ is represented by the set of its incident vertices, which is a vertex subset of $V$.
We use $f := \max_{e \in E}|e|$ to denote the size of the largest hyperedge in the considered graph $G$.

\smallskip

For any edge subset $\MC{E} \subseteq E$, we use $\MC{E}[v]$ to denote the set of edges in $\MC{E}$ that are incident to the vertex $v$.
Formally, $\vphantom{{\dfrac{}{}}^{\frac{T}{T}}} \MC{E}[v] := \big\{\hspace{2pt} e \hspace{2pt} \colon \hspace{2pt} e\in \MC{E} \text{ such that } v \in e \hspace{2pt}\big\}$.
This definition extends to sets of vertices, i.e., for any $U \subseteq V$, $\MC{E}[U] := \bigcup_{v \in U}\MC{E}[v] = \big\{\hspace{2pt} e \hspace{2pt} \colon \hspace{2pt} e\in \MC{E} \text{ such that } e \cap U \neq \emptyset \hspace{2pt}\big\}$.

\paragraph{Minimum Vertex Cover with Hard Capacities (VC-HC).}
In this problem we are given a hypergraph $G = (V,E)$, where each $e\in E$ is associated with a demand $d_e$ and each $v \in V$ is associated with a capacity $c_v$ and an (integral) available multiplicities $m_v$.

\smallskip

A solution to this problem consists of an assignment $h\colon E \times V \rightarrow {\mathbb{R}}^+ \cup \{0\}$, where $h_{e,v}$ denotes the fraction of the edge $e$ that is assigned to the vertex $v$.
The multiplicity given by $h$ is denoted $\vphantom{{\dfrac{}{}}^{\dfrac{}{T}}} x^{(h)}_v := \Big\lceil \sum_{e \in E[v]} (d_e\cdot h_{e,v}) / {c_v} \Big\rceil$.
The assignment $h$ is said to be \emph{feasible} if $\sum_{v \in e}h_{e,v} =1$ for all $e\in E$ and $x^{(h)}_v \le m_v$ for all $v \in V$.
%
%

\medskip

Given an instance $\Pi = (V, E, \mathbf{c}, \mathbf{m}, \mathbf{d})$ as described above, the problem VC-HC is to find a feasible assignment $h$ such that $\vphantom{{\dfrac{}{}}^{\frac{T}{T}}} \sum_{v \in V}x^{(h)}_v$ is minimized.
Without loss of generality, we assume that the input graph $G$ admits a feasible assignment since this condition can be checked via a max-flow computation.
%

\smallskip

We also remark that, when $d_e$ and $c_v$ are integer-valued for all $e \in E$, $v \in V$, by the integrality of b-matching polytope, any fractional assignment can be turned into an integral assignment, i.e., $d_e\cdot h_{e,v} \in \MBB{Z}^{\ge 0}$ for all $e,v$, using a standard integer flow computation.
%

\paragraph{Extended LP relaxation for VC-HC.}
Given an instance $\Pi = (V, E, \mathbf{c}, \mathbf{m}, \mathbf{d})$ of VC-HC and an additional lower-bound $\bm{\ell}$, where $\mathbf{0} \le \bm{\ell} \le \mathbf{m}$, on the multiplicity of the vertices, we consider the following LP relaxation, with $\Psi=(V,E,\bm{\ell},\mathbf{c},\mathbf{m},\mathbf{d})$ being a parameter tuple:
%


%
\begin{figure*}[h]
\centering
\begin{minipage}{0.3\textwidth}
\begin{equation}
\text{LP($\Psi$)} \notag
\end{equation}
\fbox{\begin{minipage}{.95\textwidth}
\begin{align}
\min_{(\mathbf{x},\mathbf{h})} & \quad \sum_{v \in V}x_v  & &
 \notag \\
\text{s.t.} \notag \\[-10pt]
& \quad (\mathbf{x},\mathbf{h}) \in \mathbf{Q}(\Psi) \notag
\end{align}
\vspace{-12pt}
\end{minipage}}
\end{minipage}
\hspace{0pt}
%
{\begin{minipage}{0.64\textwidth}
\begin{equation}
\text{The Polytope} \enskip \mathbf{Q}(\Psi) \colon 
\notag 
\end{equation}
\begin{tabular}{@{}c@{}}
\fbox{
\begin{minipage}{.9\textwidth}
\begin{subequations}
\begin{align}
& \sum_{v \in e}h_{e,v} \hspace{2pt} = \hspace{2pt} 1, & & \forall e \in E \label{LP_cdh_e} \\[2pt]
& \hspace{-4pt}\sum_{e \in E[v]}d_e\cdot h_{e,v} \hspace{2pt} \le \hspace{2pt} c_v \cdot x_v, & & \forall v \in V \label{LP_cdh_v} \\[3pt]
& \ell_v \hspace{2pt} \le \hspace{2pt} x_v \hspace{2pt} \le \hspace{2pt} m_v, & & \forall v \in V \label{LP_cdh_mv} \\[3pt]
& 0 \hspace{2pt} \le \hspace{2pt} h_{e,v} \hspace{2pt} \le \hspace{2pt} x_v,  & & \forall e \in E, \enskip v \in e  \label{LP_cdh_ev} 
\end{align}
\end{subequations}
\vspace{-14pt}
\end{minipage}\quad}
\end{tabular}
\end{minipage}}
\end{figure*}
%

%
%
Since each of the variables $h_{e,v}$ and $x_v$ is bounded from both below and above, we know that $\mathbf{Q}(\Psi)$ is indeed a polytope, and the reference to its extreme points\footnote{Recall that $p \in \mathbf{Q}(\Psi)$ is an extreme point if it is not in the interior of any line segment contained in $\mathbf{Q}(\Psi)$, i.e., $p = \lambda r + (1-\lambda) s$ for some $0 < \lambda < 1$ implies that either $r \notin \mathbf{Q}(\Psi)$ or $s \notin \mathbf{Q}(\Psi)$.} is well-defined.

\medskip

Throughout this paper, for any given instance $\Pi$ of VC-HC, a number of different parameter tuples will be considered.
%
However, since $V$, $\mathbf{m}$, and $\mathbf{d}$ will remain the same in every considered tuple, we will simply use $(E, \bm{\ell}, \mathbf{c})$ to denote the parameter tuple $\Psi$ for the considered LP.

%
%

%

%


%


\section{The VC Constraint and Edge Folding}
\label{sec-VC-constraint}

Our approximation algorithm is inspired by a simple one-edge example and a review on the natural LP relaxation, in particular, the constraint~(\ref{LP_cdh_ev}) which connects VC-HC to the classical vertex cover (VC) and reveals the uncapacitated problem core of VC-HC.
%
In the following we elaborate this idea in further detail.

\smallskip

Consider LP($\Psi$) with $\Psi = (E, \bm{0}, \bm{c})$, i.e., the original natural LP relaxation.
%
It is well-understood that, although the constraints~(\ref{LP_cdh_e}) to~(\ref{LP_cdh_mv}) together with $h_{e,v} \ge 0$ for all $e,v$ already formulate VC-HC, it does not yield a solution with bounded integrality gap since it allows vertices to take arbitrarily small multiplicities.
This is illustrated by the one-edge example shown in Figure~\ref{fig-one-edge-ex}.
The constraints~(\ref{LP_cdh_e}) to~(\ref{LP_cdh_mv}) can be satisfied by setting $h_{\tilde{e},\tilde{v}} = 1$ and $x_{\tilde{v}} = \epsilon \approx 0$.
As a result, the fractional solutions can deviate arbitrarily from the integral solution for which we would have $x_{\tilde{v}} = 1$.

\smallskip

\begin{wrapfigure}[11]{r}{0.16\textwidth}
\vspace{-24pt}
\begin{flushright}
\fbox
{\begin{minipage}{.14\textwidth}
\caption{} \label{fig-one-edge-ex}
\medskip
\includegraphics[scale=1.2]{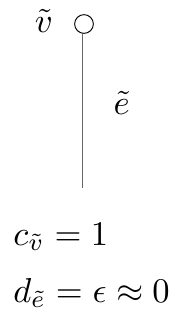}
\vspace{6pt}
\end{minipage}}
\end{flushright}
\end{wrapfigure}
This is how constraint~(\ref{LP_cdh_ev}) is introduced to bound integrality gap.
With this constraint in place, the value of $x_{\tilde{v}}$ is lower-bounded by $h_{\tilde{e},\tilde{v}}$.
This ensures that the fractional solution in this example will agree with its integral solution.
One explanation to this phenomenon is that constraint~(\ref{LP_cdh_ev}) connects this relaxation to the classical VC problem --- a problem for which we have extensively stuidied and for which we understand very well.

\medskip

\noindent
\textbf{Our key observation} here is that, while the VC constraint,~(\ref{LP_cdh_ev}), strengthens the LP relaxation, it also hints on how the rounding can be done locally when it is tight.
Suppose that $h_{e',v} = x_v$ holds for some $e'$ and $v$.
%
Then from~(\ref{LP_cdh_v}) we know that $$d_{e'} \cdot h_{e',v} \hspace{2pt} \le \hspace{2pt} \sum_{e\colon v \in e}d_{e}\cdot h_{e,v} \hspace{2pt} \le \hspace{2pt} c_v\cdot x_v,$$ and it follows that $d_{e'} \le c_v$ since $h_{e',v} = x_v$.

\smallskip

This suggests that:
(1) The existence of $e'$ is vital in support for the value of $x_v$, despite the fact that $d_{e'}$ can be substantially smaller than $c_v$.
(2) If the vertex $\vphantom{v^T}v$ is to be rounded up eventually, then we know that its capacity will be sufficient for covering $d_{e'}$.
This motivates the key notion of \emph{structural support} and 
%
the \emph{edge-folding} operation.

\begin{definition}[Supporting edge]
Let $\Psi = (\hat{E}, \bm{\ell}, \bm{c})$ be a parameter tuple and $\hat{p} = (\hat{x},\hat{h})$ be a feasible solution for LP($\Psi$).
For any $e \in \hat{E}$ and $v \in e$, we say that edge $e$ (structurally) \emph{supports} vertex $v$ in $\hat{p}$ if $0 < \hat{h}_{e,v} = \hat{x}_v$.
\end{definition}

\begin{definition}[Edge folding]
Let $\Psi = (\hat{E}, \bm{\ell}, \bm{c})$ be the current tuple, $\hat{p}=(\hat{x},\hat{h})$ be a feasible solution for LP($\Psi$), and $e$ be an edge that supports a vertex $v$ in $\hat{p}$.
By \emph{folding $e$ into $v$}, we: 

\vspace{4pt}
\begin{inparaenum}[(i)\hspace{0pt}]
\hspace{6pt}
	\item
		Remove $e$ from $\hat{E}$ and decrease $c_v$ by $d_e$. \quad
%
%
\\[-10pt]

\hspace{6pt}
	\item
		Impose a new constraint $x_v \hspace{2pt} \ge \hspace{2pt} \hat{x}_v$ to LP($\Psi$) by setting $\ell_v = \hat{x}_v$.
\end{inparaenum}
\end{definition}

%


%
Note that, from the observation described above, folding supporting edges into the supported vertices results in no loss in the feasibility and approximation guarantee of the final solution if we know in advanced that the supported vertices are to be rounded-up eventually.
%

%

%


\section{Iterative Partial Rounding for VC-HC}
\label{sec-iterative-partial-rounding-VC-HC}

The observation in the previous section motivates our iterative partial rounding approach, which focuses on tackling the VC constraint -- an uncapacitated problem core of VC-HC, using the edge-folding operation. 
At a high-level description, it is done as follows.
%
We start with the initial parameter tuple $\Psi = (E, \bm{0}, \bm{c})$.


%

\begin{enumerate}
	\item
		Solve LP($\Psi$) for a \emph{basic optimal solution} ${p} = (\bm{x}, \bm{h})$.
		
		Let $I := \big\{ v \colon 0 < {x}_v < \frac{1}{f} \big\}$.
		
	\item
		If there exists an edge $e$ that 
		structurally supports some $v \notin I$ in $p$,
		
		\emph{then} \enskip  
		\vspace{-18pt}
		\begin{itemize}[\qquad \quad -- ]
			\item
				Fold $e$ into $v$.
			\item
				Restart Step 1.
		\end{itemize}
		
		\vspace{-40pt}
		\hspace{5.6cm}
		\begin{minipage}{.6\textwidth}
		\emph{otherwise} 
		\vspace{-12pt}
		\begin{itemize}[\qquad\qquad\quad -- ]
			\item
				Round up $\bm{x}$ and stop.
		\end{itemize}
		\end{minipage}
\vspace{6pt}
\end{enumerate}
In contrast to previous approaches, e.g.,~\cite{Saha:2012:SCR:2359332.2359443,doi:10.1137/1.9781611973402.124,1151271}, which round up every vertex with decent value at once, we only deal with those that are structurally supported.
Furthermore, instead of rounding up the variables aggressively to one, we round them partially to a fractional value by imposing stronger lower bound constraints on them.
This nature allows previously non-supporting edges to become supporting as the algorithm iterates.
The key feature of this approach is that it offers a series of LPs with {non-increasing} objective values which preserves the optimality of the initial LP for the final rounding step.

\medskip

Let $\hat{\Psi} = (\hat{\MC{E}}, \bm{\hat{\ell}}, \hat{\bm{c}})$ denote the parameter tuple when the algorithm enters the final rounding step.
%
%
Let $\bm{h}'$ denote the $\{0,1\}$-assignment function for $E \setminus \hat{\MC{E}}$ that indicates the vertex which each edge $e \in E \setminus \hat{\MC{E}}$ is folded into.
%
In particular, for each $e\in E \setminus \hat{\MC{E}}$ and $v \in e$, $h'_{e,v}$ is set to $1$ if $e$ is folded into $v$ and $0$ otherwise.
%
Let $\hat{p} = (\hat{x}, \hat{h})$ denote the basic optimal solution computed for $\hat{\Psi}$.

\smallskip

The final output $(\bm{x}^*, \bm{h}^*)$ is defined as follows.
For any $v \in V$ and $e \in E[v]$, let
$$x^*_v := \CEIL{\vphantom{\bigcup} \hat{x}_v}
\quad \text{and} \quad
h^*_{e,v} := \begin{cases}
\hspace{2pt} \hat{h}_{e,v}, \enskip & \text{if $e \in \hat{\MC{E}}$,} \\
\hspace{2pt} h'_{e,v}, & \text{otherwise.}
\end{cases}
$$
Then $(\bm{x}^*, \bm{h}^*)$ is output as the solution.

\medskip

A pseudo-code of this algorithm is provided in Figure~\ref{fig-overview-tight-vc-hc} in Page~\pageref{fig-overview-tight-vc-hc} for further reference.
We remark that, since this approach is essentially insensitive to multiple foldings, in the actual algorithm we will fold every supporting edge, provided that it supports some $v \notin I$.
Furthermore, ties are broken arbitrarily if an edge supports multiple vertices outside $I$.
%
%

\medskip

Let Tight-VC-HC denote the above algorithm. 
Our tight approximation result is stated in the following theorem:

\begin{theorem}
\label{thm-tight-vc-hc}
On any instance $\Pi = (V,E,\bm{c},\bm{m}, \bm{d})$ of VC-HC with maximum edge size $f \ge 2$, algorithm Tight-VC-HC outputs an $f$-approximation $(\bm{x}^*, \bm{h}^*)$ in polynomial time.
\end{theorem}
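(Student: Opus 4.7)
The plan is to decompose Theorem~\ref{thm-tight-vc-hc} into three claims: (i) the algorithm terminates in polynomial time and produces a feasible assignment $(\bm{x}^*,\bm{h}^*)$; (ii) the LP optimum is non-increasing across all folding operations; and (iii) the final rounding cost satisfies $\sum_v \lceil \hat{x}_v \rceil \le f \cdot \sum_v \hat{x}_v$. Since the original LP value lower-bounds $\mathrm{OPT}(\Pi)$, chaining (ii) and (iii) yields the $f$-approximation.

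Termination is immediate: every non-terminating iteration folds at least one edge and therefore strictly decreases $|\hat{\MC{E}}|$, so the algorithm performs at most $|E|$ LP solves. For feasibility, I would first observe that a vertex $v$ absorbs edges only while $\hat{x}_v > 0$, at which point the folding fixes $\ell_v := \hat{x}_v$, and this positive lower bound persists in all subsequent LPs; consequently $x^*_v \ge 1$ whenever $v$ has absorbed any edge. Writing $\hat{c}_v$ for the residual capacity at $v$ in the terminal LP, the total demand landing at $v$ under $(\bm{x}^*,\bm{h}^*)$ is
\begin{equation*}
\sum_{e \text{ folded into } v} d_e \,+\, \sum_{e \in \hat{\MC{E}}[v]} d_e \hat{h}_{e,v} \,\le\, (c_v - \hat{c}_v) + \hat{c}_v \hat{x}_v \,\le\, c_v\, x^*_v,
\end{equation*}
where the last step uses $x^*_v \ge 1$ whenever $c_v - \hat{c}_v > 0$, together with $\hat{x}_v \le x^*_v$. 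The multiplicity bound $x^*_v \le m_v$ follows from $\hat{x}_v \le m_v$ and the integrality of $m_v$.

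For (ii), the key identity is that when $e$ structurally supports $v$ we have $\hat{h}_{e,v} = \hat{x}_v$, so removing $e$ from the LP and decreasing $c_v$ by $d_e$ reduces both sides of the capacity constraint at $v$ by exactly $d_e \hat{x}_v$. The remaining constraints are unaffected, and the newly imposed bound $x_v \ge \hat{x}_v$ holds at $\hat{p}$, so the restriction of $\hat{p}$ to $\hat{\MC{E}} \setminus \{e\}$ is feasible for the folded LP with the same objective value; the LP optimum therefore cannot increase.

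The hard step is (iii). I would split $V_+ := \{v : \hat{x}_v > 0\}$ into $S := \{v : \hat{x}_v \ge 1/f\}$ and $I := \{v : 0 < \hat{x}_v < 1/f\}$. The $S$-piece is routine: for $f \ge 2$ one checks $\lceil \hat{x}_v \rceil \le f\hat{x}_v$ both when $\hat{x}_v \le 1$ (so $\lceil \hat{x}_v \rceil = 1 \le f\hat{x}_v$) and when $\hat{x}_v > 1$ (so $\lceil \hat{x}_v \rceil \le \hat{x}_v + 1 \le f\hat{x}_v$), yielding $\sum_{v \in S}\lceil \hat{x}_v \rceil \le f\sum_{v \in S}\hat{x}_v$. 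The main obstacle is the $I$-contribution: each $v \in I$ costs $1$ to round while contributing strictly less than $1/f$ to the LP, so a naive $f$-scaled LP charge falls short by $|I| - f\sum_{v \in I}\hat{x}_v$. Here the algorithm's stopping condition becomes structurally decisive: at termination no edge supports a vertex outside $I$, so every tight VC upper bound $\hat{h}_{e,v} = \hat{x}_v > 0$ at the basic optimum $\hat{p}$ must have $v \in I$. My plan is to invoke the paper's separation lemma, which via a Carath\'eodory-style argument in the VC-HC polytope should extract from this restriction an injective vertex-to-edge correspondence on $I$ whose associated LP mass absorbs the deficit. Combining this with the $S$-bound gives $\sum_v \lceil \hat{x}_v \rceil \le f \sum_v \hat{x}_v \le f \cdot \mathrm{OPT}(\Pi)$, completing the proof.
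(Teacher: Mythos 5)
Your decomposition --- termination, feasibility of the folded solution, monotonicity of the LP optimum across foldings, and a final bound $\sum_v\lceil\hat x_v\rceil\le f\sum_v\hat x_v$ --- is exactly the paper's, and your arguments for (i) and (ii) match the paper's Lemmas~\ref{lemma-validity-tuple-refinement} and~\ref{lemma-folding-validity} essentially line by line. The gap is in (iii), which you correctly flag as the hard step but then leave as an appeal to the separation lemma ``absorbing the deficit.'' Two concrete ingredients are missing. First, the separation lemma must be applied not just with $\MC{I}=I$ but with a second set $\MC{D}=\{v: 1<\hat x_v<m_v\}$ excluded from the images of $\Gamma$: if a vertex $u$ with $1<\hat x_u<1+1/f$ were allowed inside the subedge charged to $v\in I$, the bound already fails for $f=2$ (e.g.\ $\hat x_u=1.01$, $\hat x_v=0.01$ gives rounding cost $3>2(\hat x_u+\hat x_v)$). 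That $D$ is ``not supported'' again follows from the stopping condition, but you never introduce $D$, and an injective vertex-to-edge correspondence on $I$ alone does not suffice.

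Second, even granted the mapping $\Gamma$, the deficit is not absorbed by ``LP mass'' in the abstract: the actual charge goes through the edge constraint. Writing $\pi(v)=(\Gamma(v)\setminus I)\cup\{v\}$, closure under intersection makes the sets $\pi(v)$ pairwise disjoint. If some $u\in\pi(v)$ has $\hat x_u=m_u$, one charges $1+m_u\le f(\hat x_v+\hat x_u)$; otherwise every $u\in\pi(v)$ has $\hat x_u\le 1$ (this is exactly where the exclusion of $D$ is used), so $\sum_{u\in\pi(v)}\lceil\hat x_u\rceil=|\pi(v)|$, while for the edge $e$ with $e^{\OP{actv}}_{\hat h}=\Gamma(v)$ one has $\sum_{u\in\pi(v)}\hat x_u\ge\sum_{u\in\pi(v)}\hat h_{e,u}=1-\sum_{u\in\Gamma(v)\setminus\pi(v)}\hat h_{e,u}>1-\tfrac1f\,|\Gamma(v)\setminus\pi(v)|\ge\tfrac1f\,|\pi(v)|$, using $\hat h_{e,u}\le\hat x_u$, $\sum_{u}\hat h_{e,u}=1$, $\Gamma(v)\setminus\pi(v)\subseteq I$, and $|\Gamma(v)|\le f$. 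This per-edge computation, together with the disjointness of the $\pi(v)$ and your $S$-bound for the vertices not covered by any $\pi(v)$, is what closes the argument; without it the proof of (iii) is incomplete.
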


From the usage of edge-folding operation in the algorithm and the intuition provided in Section~\ref{sec-VC-constraint} for the VC constraints, it is not difficult to see that:
\begin{itemize}
	\item
		Algorithm Tight-VC-HC terminates in $O(|E|)$ rounds.
		
	\item
		The output $(\bm{x}^*, \bm{h}^*)$ is indeed feasible for the initial LP$\big( (E, \bm{0}, \bm{c}) \big)$, and $\bm{x}^*$ is integral.
\end{itemize}

Due to the page limit, the proof for the feasibility of algorithm Tight-VC-HC is provided in Section~\ref{sec-tight-vc-hc-feasibility} for further reference.
In order to fully-verify the detail provided in Section~\ref{sec-tight-vc-hc-feasibility}, we also refer the reader to the notations and properties given in the beginning of Section~\ref{sec-proof-thm-tight-vc-hc}.
%
%
%
%
In the remaining of this paper, we will describe how the approximation guarantee 
is established.
%
%

\newpage

\section{Approximation Guarantee}
\label{sec-approx-ratio}

In this section we establish our approximation guarantee.
The argument we use builds on the fact that $p = (\bm{x}, \bm{h})$ is a basic feasible solution for the extended LP relaxation LP($\Psi$).
%
%
%
%
In Section~\ref{sec-separation-lemma-notation} we first define the notions that will be used to capture the structural properties given by our partial rounding approach.
In Section~\ref{sec-separation-lemma} we formally state our separation lemma, which is our main technical tool for establishing the approximation guarantee, and discuss the intuitions behind.
%
%
%
%
%
In Section~\ref{subsec-approx-ratio}, we will describe how the technical separation lemma 
is used to establish the tight approximation guarantee.

\subsection{Notion and Definitions for Extremality}
\label{sec-separation-lemma-notation}

In this section we introduce notions that are required to state our separation lemma for the polytope $\mathbf{Q}(\Psi)$.
Let $p = (\mathbf{x},\mathbf{h})$ be a point in $\mathbf{Q}(\Psi)$, where $\Psi = (\hat{E}, \bm{\ell}, \bm{c})$ is a parameter tuple.
The following terminologies are defined with reference to point $p$ and parameter tuple $\Psi$.

\medskip

\begin{definition}[Non-extremal vertex]
For any $v \in V$, we say that $v$ is \emph{non-extremal} if $\ell_v < x_v < m_v$. Otherwise $v$ is said to be \emph{extremal}.
\end{definition}

Intuitively, a vertex is non-extremal if and only if $x_v$ is not constrained by~(\ref{LP_cdh_mv}) in $\mathbf{Q}(\Psi)$.
For any $e \in \hat{E}$ and $v \in e$, we say that $v$ is an \emph{active end} of $e$ if $h_{e,v} > 0$.

\begin{definition}[Active subedge and active edge sets]
For any $e \in \hat{E}$, we define the \emph{active subedge} of $e$, denoted $e^{\OP{actv}}_h$, to be the set of its active ends, i.e., 
$$e^{\OP{actv}}_h := \Big\{ \vphantom{\dfrac{}{}}\hspace{2pt} v \hspace{2pt} \colon \hspace{2pt} v\in e, \enskip h_{e,v} > 0 \hspace{2pt} \Big\}.$$
%
%
For any edge subset $\MC{E} \subseteq \hat{E}$, we extend the above definition and use 
$\MC{E}^{\OP{actv}}_h := \BIGBP{ \vphantom{\dfrac{}{}}\hspace{2pt} e^{\OP{actv}}_h \hspace{2pt} \colon \hspace{2pt} e \in \MC{E} \hspace{2pt}}$ to denote the set of active subedges of the edges in $\MC{E}$.
\end{definition}

Intuitively, the active subedge corresponds to the set of vertices to which the demand of an edge is assigned in $\bm{h}$.
Note that, since $p \in \mathbf{Q}(\Psi)$, it follows from constraint~(\ref{LP_cdh_e}) that $e^{\OP{actv}}_h \neq \emptyset$ for all $e \in \hat{E}$.
Recall that we also use $\MC{E}[U]$ for a vertex subset $U$
of $V$ to denote the set of edges in $\MC{E}$ that are incident to some vertex in $U$.
Given the definition of active edge sets, 
$\vphantom{{\dfrac{}{}}^{\frac{T}{T}}} \MC{E}^{\OP{actv}}_h[U]$ is then used to denote the set of incident (sub)edges $U$ has in $\MC{E}^{\OP{actv}}_h$, i.e., 
$$\MC{E}^{\OP{actv}}_h[U] := \Big\{\hspace{2pt} \tilde{e} \hspace{2pt} \colon \hspace{2pt} \tilde{e} \in \MC{E}^{\OP{actv}}_h \text{ such that } \tilde{e} \cap U \neq \emptyset \hspace{2pt} \Big\}.$$
Intuitively, this corresponds to the set of active subedges from $\MC{E}$ that intersect $U$.
Note that it is a collection of subedges rather than the original edges.
%


%
%
%
%


\begin{definition}[Supporting sets and supported sets]
For a vertex subset $U \subseteq V$, we say that
\begin{itemize}
	\item
		$U$ is \emph{supporting} if there exists an edge $e \in \hat{E}$ with $e^{\OP{actv}}_h \cap U \neq \emptyset$ and a vertex $v \in e^{\OP{actv}}_h \setminus U$ such that $e$ supports $v$.
		
	\item
		$U$ is \emph{supported} if there exists $e \in \hat{E}$ and $v \in e^{\OP{actv}}_h \cap U$ such that 
		$v$ is supported by $e$.
\end{itemize}
\end{definition}

The definition of supporting sets and supported sets may seem unnatural in the beginning. 
However, consider the set $I := \big\{ v \colon 0 < {x}_v < \frac{1}{f} \big\}$ of vertices with small fractional values.
In our partial rounding approach, we iteratively remove the support from vertices with decent fractional values, i.e., those not in $I$, and when the algorithm terminates, we know that the set of small vertices is not supporting while the set of vertices with large fractional values, e.g., those with $x_v > 1$, is not supported.

\subsection{A Separation Lemma for Polytope $\mathbf{Q}(\Psi)$}
\label{sec-separation-lemma}

We formally state our structural separation lemma, which states that a strong partition exists when the edge-to-vertex supporting relations are properly eliminated:

\begin{lemma}[Existence of a strong partition] \label{thm-structural-mapping}
Let $p = (\mathbf{x},\mathbf{h})$ be an extreme point of polytope $\mathbf{Q}(\Psi)$.
For any disjoint sets $\MC{I}$, $\MC{D}$ of non-extremal vertices,
if $\MC{I}$ is not supporting and $\MC{D}$ is not supported, 
then there exists a mapping\hspace{2pt}\footnote{We would like to refer the reader to the previous section for the definitions of non-extremal vertices, active subedges, and supporting(-ed) sets in order to fully-access the structural message in this lemma.}
\vspace{-4pt}\enskip $$\Gamma \hspace{2pt} \colon \enskip \MC{I} \enskip \mapsto \enskip E^{\OP{actv}}_h[\MC{I}] \setminus E^{\OP{actv}}_h[\MC{D}]
\vspace{-6pt}$$ 
such that the following hold for any $v \in \MC{I}$: 
\vspace{-2pt}
\begin{enumerate}
	\item
		(reflexive) \enskip $v \in \Gamma(v)$. 
		
		\vspace{-2pt}
	\item
		(closed under intersection) \enskip $\Gamma(u)\cap \Gamma(v) \hspace{2pt} \subseteq \hspace{2pt} \MC{I}$ \enskip  for any $u \in \MC{I} \setminus \{v\}$.
\end{enumerate}
\end{lemma}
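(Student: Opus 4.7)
My plan is a proof by contradiction via a perturbation argument: assuming no valid $\Gamma$ exists, I construct a nonzero direction $\Delta p = (\Delta \bm{x}, \Delta \bm{h})$ that lies in the kernel of every tight constraint of $\mathbf{Q}(\Psi)$ at $p$; then $p \pm \epsilon \Delta p$ both remain feasible for small $\epsilon > 0$, contradicting $p$ being an extreme point.

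The selection of $\Gamma$ can be framed as a Hall-type problem: each $v \in \MC{I}$ chooses an active subedge $\Gamma(v) \ni v$ from the admissible family $\MC{F} := E^{\OP{actv}}_h[\MC{I}] \setminus E^{\OP{actv}}_h[\MC{D}]$, with the external parts $\Gamma(v)^{\OP{actv}}_h \setminus \MC{I}$ required to be pairwise disjoint across distinct $v$'s (which is equivalent to condition (2) since the external parts are precisely the portions of $\Gamma(v)$ outside $\MC{I}$). Vertices incident to some $e \in \MC{F}$ with $e^{\OP{actv}}_h \subseteq \MC{I}$ are trivially satisfied; for the remaining vertices, if no valid selection exists, a defect version of Hall's theorem supplies a subset $S \subseteq \MC{I}$ whose admissible external neighborhood in $\MC{F}$ has strictly smaller matching capacity than $|S|$.

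Using this $S$, I build $\Delta p$ with $\Delta \bm{x}$ supported on $S$ and $\Delta \bm{h}$ supported on active pairs incident to $S$, requiring $\sum_{v \in e} \Delta h_{e,v} = 0$ on every edge (preserving (\ref{LP_cdh_e})), $\Delta h_{e,v} = \Delta x_v$ on every tight supporting pair (preserving the tight version of (\ref{LP_cdh_ev})), and $\sum_{e} d_e \Delta h_{e,v} = c_v \Delta x_v$ at every vertex with tight capacity (preserving the tight version of (\ref{LP_cdh_v})). The three hypotheses play distinct roles here: non-extremality of $\MC{I}$ deactivates (\ref{LP_cdh_mv}) on $S$, so $\Delta \bm{x}$ is free on $S$; the assumption that $\MC{I}$ is not supporting forces every supporting partner of an $S$-incident edge to lie inside $\MC{I}$, confining $\Delta \bm{x}$-propagation; and the assumption that $\MC{D}$ is not supported rules out tight $h_{e,v'} = x_{v'}$ at $v' \in \MC{D}$, so $\MC{D}$-endpoints of every edge touching $\MC{D}$ absorb $\Delta \bm{h}$ freely without further coupling. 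The Hall defect then produces a strict excess of unknowns over tight equations in the restricted system, so a nonzero $\Delta p$ exists; since all tight constraints are preserved and slack constraints tolerate small perturbations, $p \pm \epsilon \Delta p \in \mathbf{Q}(\Psi)$, contradicting extremality.

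The principal obstacle is the explicit construction of $\Delta \bm{h}$ so that every equality constraint is preserved simultaneously. This comes down to a careful rank comparison between the tight constraint subsystem at $p$ and the local variables around $S$; the two hypotheses on $\MC{I}$ and $\MC{D}$ are crafted precisely to produce the rank shortfall that converts Hall's defect into a feasible perturbation direction, with $\MC{D}$ serving as a ``sink'' for excess $\Delta \bm{h}$ mass and $\MC{I}$ forming a closed ``universe'' for $\Delta \bm{x}$-propagation.
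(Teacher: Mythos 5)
Your proposal is the contrapositive of the paper's argument: where the paper uses extremality of $p$ (full column rank of the tight constraint submatrix) to \emph{directly} extract an injective variable-to-constraint pivot map $\sigma$, reduces away $\MC{C}^{(E\times V)}$ by Gaussian elimination, and reads $\Gamma$ off as $\big((\sigma\circ\pi)(x_v)\big)^{\OP{actv}}_h$, you propose to assume $\Gamma$ does not exist and manufacture a feasible perturbation direction $\Delta p$ with $p\pm\epsilon\Delta p \in \mathbf{Q}(\Psi)$, contradicting extremality. The two directions are logically equivalent, and your reading of the roles of the three hypotheses (non-extremality frees $\Delta x$; $\MC{I}$ not supporting confines $\Delta x$ propagation to $\MC{I}$; $\MC{D}$ not supported makes $\MC{D}$-endpoints absorb $\Delta h$ without coupling) matches the mechanism in the paper's row-reduction step. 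However, there are two genuine gaps.

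First, the reduction to a ``Hall-defect'' set $S$ is not sound as stated. The task is not bipartite matching: you must choose, for each $v\in\MC{I}$, an admissible subedge containing $v$ such that any two choices intersect only inside $\MC{I}$ — a conflict-free selection in a hypergraph, not a system of distinct representatives, and the paper itself points out that $\Gamma$ need not be injective. There is no off-the-shelf defect theorem that certifies failure by producing a deficient set $S\subseteq\MC{I}$; the dual certificate for this selection problem is not simply a vertex set with a small neighborhood, and the structure you need is exactly the rank count on the constraint matrix that the paper performs. What the paper actually matches is the set of \emph{variables} $X\cup H^*$ to the set of \emph{constraints} $\tilde{\MC{C}}^{(E)}\cup\tilde{\MC{C}}^{(V)}$ (a much larger bipartite structure), and the edge constraints picked up by $\sigma\circ\pi$ are a by-product, not the matching itself.

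Second, and more decisively, your proof does not construct $\Delta p$. You write that ``the principal obstacle is the explicit construction of $\Delta\bm{h}$ so that every equality constraint is preserved simultaneously'' and that this ``comes down to a careful rank comparison'' — but that rank comparison, with the row reduction that absorbs $\MC{C}^{(E\times V)}$ and the accounting showing that a rank shortfall follows from the nonexistence of $\Gamma$, \emph{is} the proof. As written, the implication ``no $\Gamma$ $\Rightarrow$ strict excess of unknowns over tight equations on the relevant submatrix'' is asserted, not demonstrated, and it is not a priori clear: the $h_{e,v}$ variables with $e\in E[\MC{I}]$, $v\notin\MC{I}\cup\MC{D}$ also inhabit the system and also occupy constraints, so counting only $\MC{I}$ and its admissible subedges does not produce the required deficiency. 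To close this you would need essentially the matrix decomposition into $M^{(1)}_{\MC{C}^{(E\times V)}}$, $M^{(2)}_{\MC{C}^{(E\times V)}}$, $M^{\MC{D}}_{\MC{C}^{(V)}}$, and $\tilde{M}$ that the paper carries out, at which point the direct Carath\'{e}odory-style argument is simpler than the contrapositive.
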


\medskip

Given sets $\MC{I}$ and $\MC{D}$ as stated in the prerequisite, Lemma~\ref{thm-structural-mapping} says that, there exists a mapping $\Gamma$ such that, for each $v \in \MC{I}$, the active subedge $\Gamma(v)$ contains the vertex $v$ but 
no vertices in $\MC{D}$.
Furthermore, the intersection of these active subedges can happen only inside $\MC{I}$, i.e., 
they are mutually disjoint outside the set $\MC{I}$.
%
See also Figure~\ref{fig-illustration-separation-lemma-general} for an illustration.

\smallskip

Note that, the statement of Lemma~\ref{thm-structural-mapping} does not exclude the possibility that $\Gamma(v) \subseteq \MC{I}$ and the mapping $\Gamma$ may not necessarily be injective.
In fact, it depends on the choice of $\MC{I}$, since we make no prior assumption on the edge-vertex supporting relation inside $\MC{I}$.
When all non-extremal vertices are selected into $\MC{I}$, then Figure~\ref{fig-illustration-separation-lemma-general} is what we can expect.

\smallskip

\begin{figure*}[tp]
\centering
\includegraphics[scale=0.6]{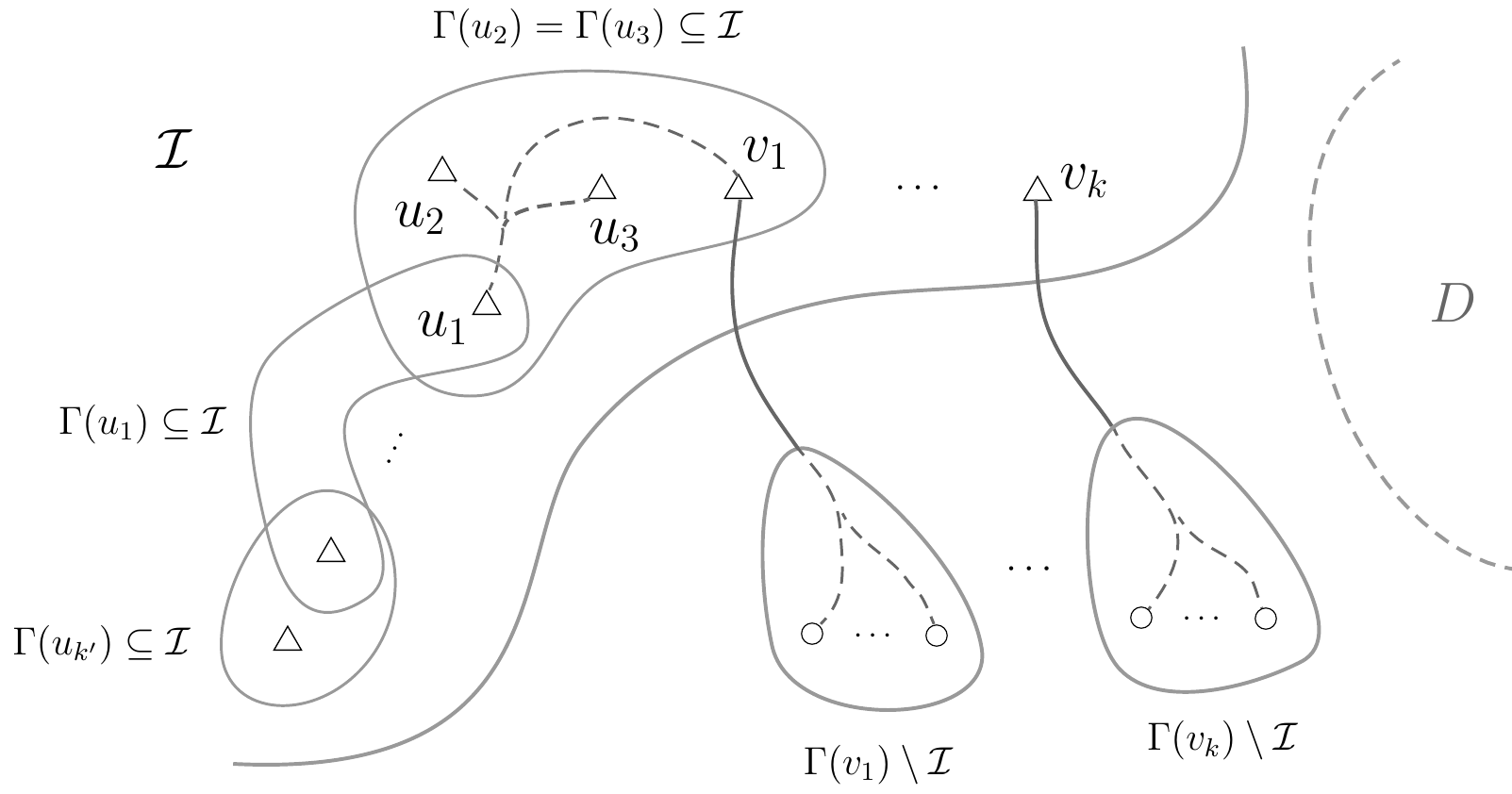}
\vspace{-6pt}
\caption{$_{\dfrac{}{}}$ \hspace{-8pt} A depiction on the set $\MC{I}$ and the subedges given by $\Gamma$. $\big\{u_i\big\}_{1\le i\le k'} \subseteq \MC{I}$ denote vertices with $\Gamma(u_i) \subseteq \MC{I}$ and $\big\{v_i\big\}_{1\le i\le k}$ denote vertices with $\Gamma(v_i) \not\subseteq \MC{I}$, i.e., $\Gamma(v_i)\setminus \MC{I} \neq \emptyset$.}
\label{fig-illustration-separation-lemma-general}
\end{figure*}

With an appropriate choice of $\MC{I}$, e.g., one that ensures $\tilde{e} \not\subseteq \MC{I}$ for all $\tilde{e} \in E^{\OP{actv}}_h[\MC{I}]$, however, the injective property of $\Gamma$ will follow from the two properties stated in Lemma~\ref{thm-structural-mapping}.
%
Notice that the set of vertices with small fractional values is exactly one of such choices.
In Section~\ref{subsec-approx-ratio} we will illustrate how the mapping $\Gamma$ is used to obtain a tight approximation result.

\smallskip

In the following we discuss the technique we use to build this lemma and the intuitions behind.
A formal proof to this lemma is provided in Section~\ref{sec-proof-separation-lemma} for further reference.

\paragraph{Extreme points of the VC-HC polytope.}
Lemma~\ref{thm-structural-mapping} can be seen as a recast of the well-established Carath\'{e}odory's theorem~\cite{Schrijver:1986:TLI:17634} in the language of VC-HC polytope.
Intuitively, it says that: A vertex-to-edge correspondence with strong separation guarantee exists at places where the the VC constraint is inactive.


%
%
%
%
%

\smallskip

Below we further elaborate this idea.
%
Let $p = (\bm{x},\bm{h})$ be an extreme point of the VC-HC polytope and $\OP{Var}(\MC{I}, \MC{D})$ denote the set of 
variables that are related to $\MC{I}$, $\MC{D}$ and 
the edges in $E[\MC{I}]$.
By Carath\'{e}odory's theorem, 
the variables in $\OP{Var}(\MC{I}, \MC{D})$ are constrained by a set, with the same cardinality, of linearly independent inequalities
that hold with equality.

\smallskip

This gives a one-to-one (injective) correspondence between the variables and the constraints.
From the assumption that $\MC{I}$ is non-extremal and not supporting, it follows that:
\begin{enumerate}[(i) ]
	\item
		The variable $x_v$ \emph{must} be constrained by either (\ref{LP_cdh_v}) or (\ref{LP_cdh_ev}), for all $v \in \MC{I}$.
	\item
		The variable $h_{e,u}$, where $h_{e,u} \in \OP{Var}(\MC{I},\MC{D})$ such that $e \cap \MC{I} \neq \emptyset$ and $u \notin \MC{I}$, \emph{can only} be constrained by (\ref{LP_cdh_e}) and (\ref{LP_cdh_v}).
\end{enumerate}

From (i) and the fact that the constraints extracted by Carath\'{e}odory's theorem are linearly independent, we show that for each $v \in \MC{I}$, an unique edge constraint, i.e., (\ref{LP_cdh_e}), can be identified.
This gives the edge for 
$\Gamma(v)$ to be defined.
Then the main claim of this lemma is guaranteed by (ii) and the injective property of the variable-to-constraint mapping.

\smallskip

To see that the vertices in 
$\MC{D}$ can be excluded from the subedges defined above, it suffices to observe that for any $\vphantom{\frac{}{}} v \in \MC{D}$, the only constraint for variable $x_v$ to be constrained is (\ref{LP_cdh_v}), i.e., $v$ itself, since it is non-extremal and not supported by definition.
Therefore it cannot be included in any of these active subedges since their corresponding constraints have already been occupied.
The complete proof of this lemma is provided in Section~\ref{sec-proof-separation-lemma} for further reference.

\subsection{Establishing the Approximation Guarantee}
\label{subsec-approx-ratio}

In this section we describe how Lemma~\ref{thm-structural-mapping} is used to obtain our tight approximation guarantee.
Let $\hat{p} = (\hat{x}, \hat{h})$ denote the basic optimal solution computed in the final iteration of the algorithm.
Consider the two sets
$$I := \Big\{ \hspace{2pt} v \hspace{2pt} \colon \hspace{2pt} v \in V, \enskip 0 < \hat{x}_v < \frac{1}{f} \hspace{2pt} \Big\} \quad \text{and} \quad D := \Big\{ \hspace{2pt} v \hspace{2pt} \colon \hspace{2pt} v \in V, \enskip 1 < \hat{x}_v < m_v \hspace{2pt} \Big\}.$$
%
%
%
%
It follows that $e^{\OP{actv}}_h \not\subseteq I$ for all $e$ such that $e^{\OP{actv}}_h \cap I \neq \emptyset$.
Therefore condition~(ii) of Lemma~\ref{thm-structural-mapping} implies that $\Gamma$ will be injective, i.e., $\Gamma(u) \neq \Gamma(v)$ whenever $u \neq v$,
and thereby defines an equivalence relation in $V \setminus I$ with respect to $I$, witnessed by the active subedge $\Gamma(v)$ for each $v \in I$.
%
See also Figure~\ref{fig-illustration-partition-tech-intro} for an illustration of this partition.

\begin{figure*}[bp]
\centering
\qquad\qquad\includegraphics[scale=0.6]{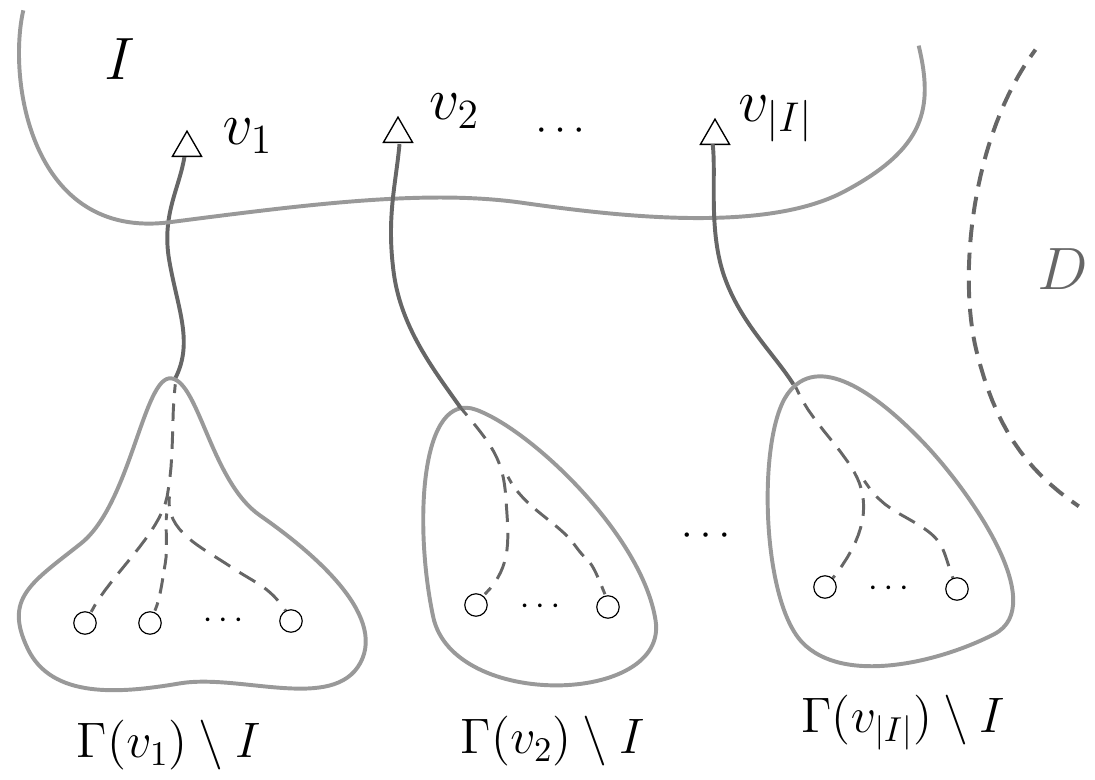}
\vspace{-6pt}
\caption{A partition of vertices in $V \setminus I$, witnessed by the active subedges $\big\{ \Gamma(v_i) \big\}_{1\le i\le |I|}$}
\label{fig-illustration-partition-tech-intro}
\end{figure*}

\smallskip

Consider the active subedge $\Gamma(v)$ for some $v \in I$.
By Lemma~\ref{thm-structural-mapping}, we know that $\Gamma(v)$ contains no vertices in $D$, since $\Gamma(v) \notin E^{\OP{actv}}_h[D]$.
It follows that, for each $u \in \Gamma(v)$, either $\hat{x}_u = m_u$ or $\hat{x}_u \le 1$ will hold.
For the former case, the rounding cost of $v$ can be absorbed by $u$, since we have $m_u \ge 1$ and $f \ge 2$.
Below we consider the case that $\hat{x}_u \le 1$ for all $u \in \Gamma(v)$.

\smallskip

Suppose that $\hat{x}_u \le 1$ for all $u \in \Gamma(v)$.
Let $e$ be an edge whose active subedge is exactly $\Gamma(v)$.
From the constraint~(\ref{LP_cdh_e}) we have $\sum_{u \in \Gamma(v)} \hat{h}_{e,u} = 1$.
%
Since we know that $\hat{x}_u$ is small for all $u \in \Gamma(v) \cap I$,  it follows that 
\vspace{-2pt}
$$\hat{x}_v + \sum_{u \in \Gamma(v) \setminus I}\hat{x}_u \enskip \ge \enskip 1 - \hspace{-10pt} \sum_{u \in (\Gamma(v) \cap I) \setminus \{v\}} \hat{x}_u \enskip \ge \enskip 1 - \frac{1}{f}\cdot \Big| \hspace{2pt} (\Gamma(v) \cap I) \setminus \{v\} \hspace{2pt} \Big| \enskip \ge \enskip \frac{1}{f}\cdot \Big| \hspace{2pt}  \big( \Gamma(v) \setminus I \big) \cup \{v\} \hspace{2pt} \Big|,
\vspace{-4pt}$$
where in the last inequality we use the fact that $\big|(\Gamma(v) \cap I) \setminus \{v\}\big| = \big|\Gamma(v)\big| - \big|(\Gamma(v) \setminus I) \cup \{v\}\big|$ and $\big|\Gamma(v)\big| \le \big| e \big| \le f$.
Therefore the rounding cost incurred by $v$ and vertices in $\Gamma(v) \setminus I$ can be bounded within $\vphantom{{\frac{}{}}^{\frac{T}{T}}} f\cdot \big( \hat{x}_v + \sum_{u \in \Gamma(v) \setminus I} \hat{x}_u \big)$, for any $f \ge 2$.

\medskip

We remark that the exclusion of $D$ from the image of $\Gamma$ is crucial in obtaining a tight approximation for $f=2$, i.e., the normal graphs.
%
The reason is that, rounding up medium-sized vertices, e.g., $u$ with $1<\hat{x}_u < 1+1/2$, already results in a factor of $2$, rendering them unable to absorb additional rounding cost incurred by small vertices.
%
%
%
%
%
%
%
%
%
A formal proof for the approximation guarantee is provided in Section~\ref{sec-tight-vc-hc-analysis} for further reference.

\smallskip

\section*{Acknowledgement}

The author would like to thank Kai-Min Chung, Herbert Yu, and the anonymous referees for their very helpful comments on the presentation of this work.

\vfill

\begin{figure*}[htp]
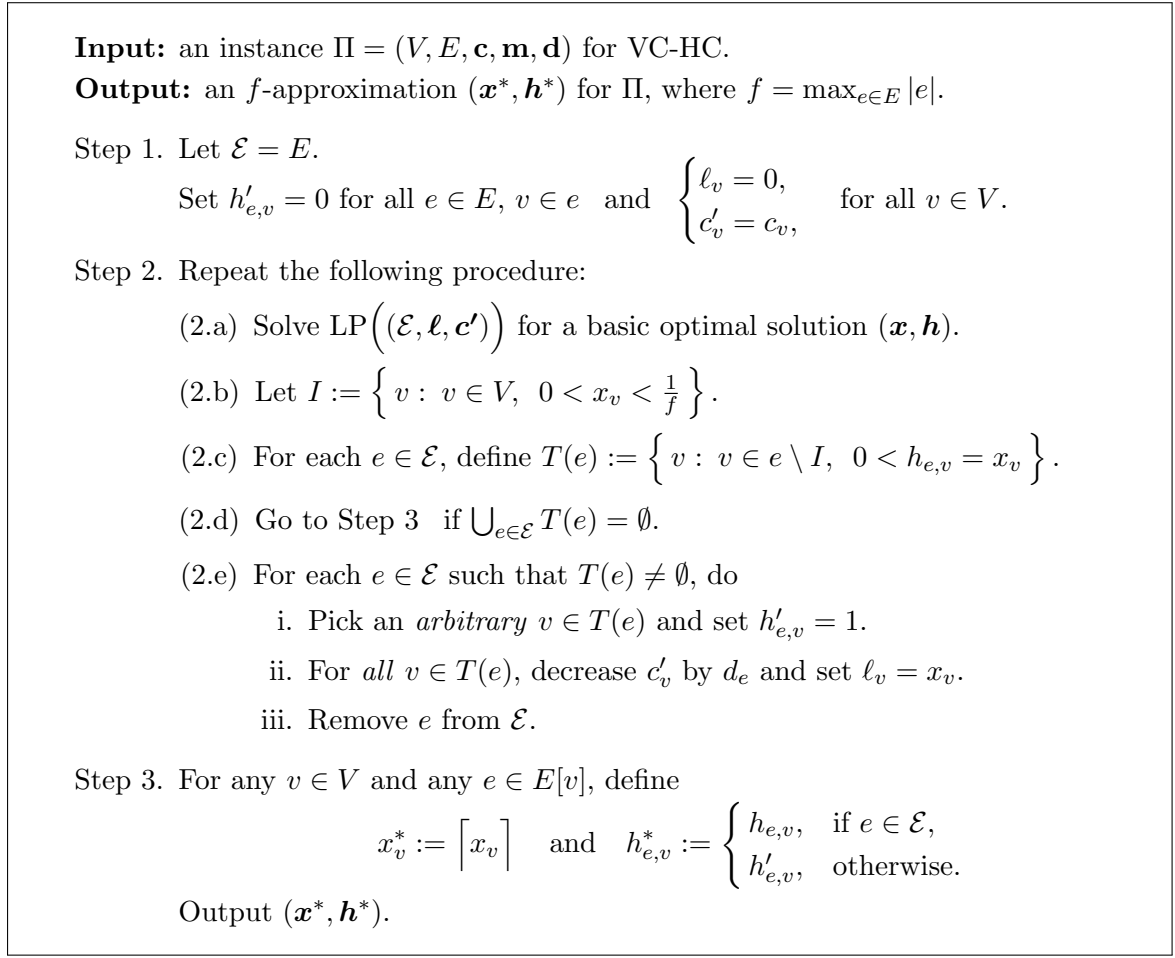

\fbox
{\qquad\begin{minipage}{0.9\textwidth}
\smallskip
\medskip

{\bf Input:} an instance $\Pi = (V, E, \mathbf{c}, \mathbf{m}, \mathbf{d})$ for VC-HC.

\vspace{2pt}

{\bf Output:} an $f$-approximation $(\bm{x}^*, \bm{h}^*)$ for $\Pi$, where $f = \max_{e \in E}|e|$.
\begin{enumerate}[Step 1.]
	\item
		Let $\MC{E} = E$.
		
		\vspace{-8pt}
		Set $h'_{e,v} = 0$ for all $e \in E$, $v \in e$ \enskip and \enskip
		$\begin{cases}
			\ell_v = 0, \\
			c'_v = c_v,
		\end{cases}$ for all $v \in V$.

		\vspace{-4pt}
		
	\item \label{alg-tight-vc-hc-loop-step}
		Repeat the following procedure:
		\vspace{-4pt}
		\begin{enumerate}[(\ref{alg-tight-vc-hc-loop-step}.a)]
			\item
				Solve LP$\Big((\MC{E}, \bm{\ell}, \bm{c'})\Big)$ for a basic optimal solution $(\bm{x}, \bm{h})$.
				
			\item
				Let $I := \BIGBP{\hspace{2pt} v \hspace{2pt} \colon \hspace{2pt} v \in V, \enskip 0 < x_v < \frac{1}{f} \hspace{2pt}}.$
				
			\item
				For each $e \in \MC{E}$, define $T(e) := \BIGBP{ \vphantom{\dfrac{}{}}\hspace{2pt} v \hspace{2pt} \colon \hspace{2pt} v \in e \setminus I, \enskip 0 < h_{e,v} = x_v \hspace{2pt}}.$
				
				\smallskip
				
			\item
				Go to Step~\ref{alg-tight-vc-hc-after-loop} \enskip if 
				$\bigcup_{e \in \MC{E}}T(e) = \emptyset.$
				
				\smallskip
				
			\item
%
%
				For each $e \in \MC{E}$ such that $T(e) \neq \emptyset$, do
				\begin{enumerate}
					\item
						Pick an \emph{arbitrary} $v \in T(e)$ and set $h'_{e,v} = 1$.
						
						\smallskip
						
					\item
						For \emph{all} $v \in T(e)$, 
						decrease $c'_v$ by $d_e$ and set $\ell_v = x_v$.
						
						\smallskip
						
					\item
						Remove $e$ from $\MC{E}$.
				\end{enumerate}
		\end{enumerate}
		
		
	\item \label{alg-tight-vc-hc-after-loop}
		For any $v \in V$ and any $e \in E[v]$, define 
		\vspace{-10pt} 
		$$x^*_v := \CEIL{\vphantom{\bigcup} x_v}
		\quad \text{and} \quad
		h^*_{e,v} := \begin{cases}
			\hspace{2pt} h_{e,v}, \enskip & \text{if $e \in \MC{E}$,} \\
			\hspace{2pt} h'_{e,v}, & \text{otherwise.}
		\end{cases}
		\vspace{-8pt}
		$$
%
%
%
		Output $(\bm{x}^*, \bm{h}^*)$.
\end{enumerate}
\vspace{-2pt}
\end{minipage}}
\caption{Algorithm Tight-VC-HC.}
\label{fig-overview-tight-vc-hc}
\end{figure*}

%



\bibliographystyle{plain}
\bibliography{approx_capacitated_domination}

\newpage

\begin{appendix}

\section{Algorithm Tight-VC-HC}
\label{sec-itr-for-VC-HC}

In this section, we formally describe our tight approximation algorithm
for VC-HC.

%


%



\begin{figure*}[t]
\begin{tikzpicture}[auto,node distance = 4cm,>=latex']
%
\tikzstyle{input} = [ellipse, text centered, fill=black!10, inner sep=4pt];
%
\tikzstyle{block} = [rectangle, fill=black!5, text centered, rounded corners, inner sep=8pt];
\tikzstyle{decision} = [diamond, aspect=2.2, fill=black!5, text centered];
\tikzstyle{arrow} = [thick,->,>=stealth]
\node (input) [input] {initial $\Psi$};
\node (loop) [block, right of=input, xshift=-0.8cm] {\parbox{6em}{\centering Solve LP($\Psi$) \\ for \\ basic opt $p$}};
\node (decide) [decision, right of=loop, xshift=0.4cm] {$\enskip \bigcup_{e\in \MC{E}} T(e) \enskip$};
\node (done) [block, right of=decide, xshift=1cm, fill=black!10] {\parbox{6em}{\centering Round up $p$ \\ \& Finish.}};
\node (revise) [block, below of=decide, yshift=1.5cm] {Refine $\Psi$: fold $e$ into $T(e)$};
\draw [arrow] (input) -- (loop);
\draw [arrow] (loop) -- (decide);
\draw [arrow] (decide) -- node {$=\emptyset$} (done);
\draw [arrow] (decide) -- node {$\neq \emptyset$} (revise);
\draw [arrow] (revise) -| (loop);
\end{tikzpicture}
\caption{Overview of Algorithm Tight-VC-HC.}
\label{fig-flowchart-tight-vc-hc}
\end{figure*}
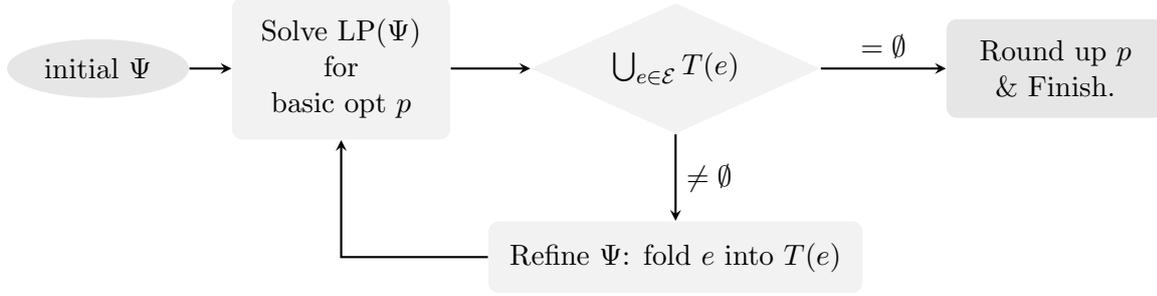

\smallskip

%
Let $\Pi = (V,E,c_v,m_v,d_e)$ denote the input instance of VC-HC.
In the algorithm we start with a trivial initial parameter tuple $\Psi = (E,\mathbf{0},\bm{c})$.
%
As the algorithm iterates, the tuple $\Psi$ is refined gradually until we have a nice parameter tuple which makes Lemma~\ref{thm-structural-mapping} applicable.
In this case we round up the solution given by LP($\Psi$) and stop.
An overview of the algorithm is given in Figure~\ref{fig-flowchart-tight-vc-hc}.
%
%
Below we describe the algorithm in details.
%

\smallskip

For a given parameter tuple $\Psi = (\MC{E}, \bm{\ell}, \bm{c'})$, the algorithm solves LP($\Psi$) for a basic optimal solution $p = (\bm{x},\bm{h})$.
Consider the set $I := \BIGBP{v\colon v \in V, \enskip 0 < x_v < \frac{1}{f}}$.
%
%
%
For any $e \in \MC{E}$, let 
$$T(e) := \BIGBP{\vphantom{\bigcup} \hspace{2pt} v \hspace{3pt} \colon \hspace{3pt} v \in e\setminus I, \enskip 0 < h_{e,v} = x_v \hspace{2pt}}$$
denote the set of vertices in $V \setminus I$ which $e$ is supporting.
%
%
%
%
If $\bigcup_{e \in \MC{E}}T(e)$ is empty, then no refinement is required 
and we proceed to the rounding stage.
Otherwise, $\Psi$ is refined as follows:
%

\begin{itemize}
	\item
		\emph{Fold in supporting edges.}
		\enskip
		For each $e \in \MC{E}$ such that $T(e) \neq \emptyset$, we fold $e$ into $T(e)$ and modify $\Psi$ as follows.
		For each $v \in T(e)$ we subtract $d_e$ from $c'_v$ and set $\ell_v$ to be $x_v$.
		Afterwards we remove the edge $e$ from $\MC{E}$.

		Note that, in the folding process, we implicitly assign the edge $e$ to 
		some particular $v \in T(e)$ and round down the assignment to zero for the remaining.
		Since $0 < h_{e,v} = x_v$ and $$d_e\cdot h_{e,v} \le \sum_{e' \in \MC{E}[v]}d_{e'}\cdot h_{e',v} \le c'_v\cdot x_v$$ hold for any $v \in T(e)$, we have $d_e \le c'_v$. 
		Provided that vertex $v$ is to be rounded up later in the final stage, the validity of this process is thereby guaranteed.
		%

\end{itemize}

If any refinement is made,
this process repeats with the refined tuple of parameters.
%
Otherwise we proceed to the rounding stage.

\paragraph{Final rounding of the vertices.}
Let $\tilde{p} = (\tilde{\bm{x}}, \tilde{\bm{h}})$ denote the basic optimal solution computed before the algorithm enters the rounding stage and $\tilde{\Psi} = (\MC{E},\bm{\ell},\bm{c'})$ denote the corresponding parameter tuple.
%
%
%
%
Let $\bm{h'}$ denote the rounded assignment function for the edges that are folded, i.e., for all $e\in E\setminus \MC{E}$, $h'_{e,v}$ is $1$ for some particular $v \in T(e)$ and zero for the remaining.

\smallskip

\noindent
The final rounding is done as follows.
For any $v \in V$ and $e \in E[v]$, define 
$$x^*_v := \CEIL{\vphantom{\bigcup} \tilde{x}_v}
\quad \text{and} \quad
h^*_{e,v} := \begin{cases}
\hspace{2pt} \tilde{h}_{e,v}, \enskip & \text{if $e \in \MC{E}$,} \\
\hspace{2pt} h'_{e,v}, & \text{otherwise.}
\end{cases}
$$
Then $(\bm{x}^*, \bm{h}^*)$ is output as the solution.
%

%

%

%


\section{Proof of Theorem~\ref{thm-tight-vc-hc}}
\label{sec-proof-thm-tight-vc-hc}

To prove Theorem~\ref{thm-tight-vc-hc}, we show that 
\begin{itemize}
	\item
		Algorithm Tight-VC-HC terminates in $O(\BIGC{E})$ rounds and outputs a feasible solution
		for LP($\Psi_0$), where $\Psi_0 = (E,\mathbf{0},\bm{c})$ is the initial tuple.
		
	\item
		The output solution $(\bm{x}^*, \bm{h}^*)$ is an $f$-approximation for VC-HC.
\end{itemize}

Consider the first statement to be proved. 
Since in each round of the execution, either the algorithm stops or at least one edge is folded.
It follows that the algorithm must terminate in $O(|E|)$ rounds.
Therefore it remains to prove the following:
\begin{itemize}[\qquad\quad {\bf --} \enskip]
	\item
		After each round, the refined tuple is feasible,
		i.e., the corresponding LP has a non-empty feasible region.
		%
		We prove this statement in Lemma~\ref{lemma-validity-tuple-refinement}.
		
	\item
		The output solution $(\bm{x}^*,\bm{h}^*)$ is feasible for LP($\Psi_0$).
		This is proved in Lemma~\ref{lemma-folding-validity}.
\end{itemize}
Then we establish the approximation guarantee of $(\bm{x}^*,\bm{h}^*)$ in Section~\ref{sec-tight-vc-hc-analysis}.
%


%
\paragraph{Notations and basic properties.}
Let $k$ denote the number of rounds the algorithm iterates before entering the rounding stage.
For $1\le i \le k$, we use the following notations to denote the respective concepts we have in the $i^{th}$ iteration:

\begin{itemize}
	\item
		$\Psi^{(i)} = \BIGP{ \hspace{2pt} \MC{E}^{(i)}, \bm{\ell}^{(i)}, \bm{c}^{(i)} \hspace{2pt}}$: The parameter tuple algorithm Tight-VC-HC maintains when it enters the $i^{th}$ rounds.
Here we have $\Psi^{(1)} = \Psi_0$ as a dummy notation for the initial tuple.

	\item
		$p^{(i)} = \BIGP{\bm{x}^{(i)}, \bm{h}^{(i)}}$: The basic optimal solution computed for LP($\Psi^{(i)}$).

	\item
		$I^{(i)}$: The set of vertices with small fractional values, i.e., $I^{(i)} := \{ v \in V \hspace{2pt} \colon \hspace{2pt} 0 < x^{(i)}_v < \frac{1}{f} \hspace{2pt} \}$.
\end{itemize}

%


%
\noindent
The following proposition 
states a sandwich property for $\ell^{(i)}$ which we will be using later.

\begin{prop} \label{prop-basic-properties-tight-vc-hc}
For any $1\le i < k$ and any $v \in V$, we have 
$\hspace{2pt} \ell^{(i)}_v \le \hspace{2pt} \ell^{(i+1)}_v \le x^{(i)}_v$. Furthermore, $\ell^{(i)}_v > 0$ implies that $\cfrac{1}{f} \hspace{2pt} \le \hspace{2pt} \ell^{(i)}_v \hspace{2pt} \le \hspace{2pt} 1$.
\end{prop}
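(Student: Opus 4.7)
The plan is to trace how $\ell_v$ evolves across iterations, using that $\ell_v$ is modified only through the edge-folding operation performed in step~(2.e.ii) of algorithm Tight-VC-HC.

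For the sandwich $\ell^{(i)}_v \le \ell^{(i+1)}_v \le x^{(i)}_v$, I would argue by case analysis on whether $v$ lies in $T(e)$ for some edge $e$ folded during iteration $i$. If not, $\ell^{(i+1)}_v = \ell^{(i)}_v$, and the upper bound $\ell^{(i)}_v \le x^{(i)}_v$ is just the feasibility of $p^{(i)}$ for LP$(\Psi^{(i)})$ via constraint~(\ref{LP_cdh_mv}). Otherwise the algorithm sets $\ell^{(i+1)}_v = x^{(i)}_v$, and the same feasibility gives $\ell^{(i)}_v \le x^{(i)}_v = \ell^{(i+1)}_v$. Either way both inequalities hold.

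For the second claim, I would pinpoint where the value of $\ell^{(i)}_v$ actually originated. Since $\ell^{(1)}_v = 0$, the hypothesis $\ell^{(i)}_v > 0$ forces the existence of a largest index $j < i$ at which $\ell_v$ was updated through folding; by the monotonicity established in the first part, $\ell^{(i)}_v$ coincides with the value $x^{(j)}_v$ assigned at that step. For this update to have been triggered, some edge $e$ must have had $v \in T(e)$, which by definition means $v \in e \setminus I^{(j)}$ and $0 < h^{(j)}_{e,v} = x^{(j)}_v$. The first condition gives $x^{(j)}_v \ge 1/f$ from the definition of $I^{(j)}$, while the second together with constraint~(\ref{LP_cdh_e}) gives $x^{(j)}_v = h^{(j)}_{e,v} \le \sum_{u \in e} h^{(j)}_{e,u} = 1$.

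The main subtlety I anticipate is the bookkeeping needed to argue cleanly that $\ell^{(i)}_v$ equals $x^{(j)}_v$ for the most recent update-triggering iteration $j$; once this identification is made, the interval $[1/f, 1]$ drops out immediately from the structure of step~(2.e.ii) and the definition of $T(e)$, and no deeper LP-extremality argument is needed.
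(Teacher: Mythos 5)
Your proof is correct and follows essentially the same route as the paper's: the sandwich inequality comes from observing that $\ell^{(i+1)}_v$ is either $\ell^{(i)}_v$ or $x^{(i)}_v$ together with feasibility of $p^{(i)}$ via constraint~(\ref{LP_cdh_mv}), and the bounds $\frac{1}{f} \le \ell^{(i)}_v \le 1$ come from the fact that any update sets $\ell_v$ to a value $x^{(j)}_v = h^{(j)}_{e,v}$ with $v \notin I^{(j)}$, forcing $x^{(j)}_v \in [\frac{1}{f}, 1]$. Your explicit identification of the most recent update index $j$ is a slightly more careful bookkeeping of what the paper compresses into the phrase ``the non-decreasing property of $\ell^{(i)}$,'' but the substance is identical.
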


\begin{proof}
For the first statment, it suffices to see that, in the algorithm, $\ell^{(i+1)}_v$ equals either $\ell^{(i)}_v$ or $x^{(i)}_v$.
In both cases it implies that $\ell^{(i)}_v \le \ell^{(i+1)}_v \le x^{(i)}_v$.

\smallskip

The second statement follows from the non-decreasing property of $\ell^{(i)}$ with respect to $i$ and the fact that $\ell^{(i)}_v$ is updated only when $v \notin I$ and $0 < x^{(i)}_v = h^{(i)}_{e,v}$ for some $e$, which implies that $\cfrac{1}{f} \hspace{2pt} \le \hspace{2pt} x^{(i)}_v \hspace{2pt} \le \hspace{2pt} 1$.
\end{proof}

\medskip

\subsection{Feasibility of Tight-VC-HC}
\label{sec-tight-vc-hc-feasibility}

For any $1\le i \le k$ and any edge subset $E' \subseteq E$, we define the extended assignment $\bm{h}^{(i)}|_{E'}$ of $\bm{h}^{(i)}$ with respect to the edge set $E'$ as follows.
For any $e \in E'$ and any $v \in e$, let
$$\big( h^{(i)}|_{E'} \big)_{e,v} := \begin{cases}
h^{(i)}_{e,v}, & \text{if $e \in \MC{E}^{(i)}$}, \\
0, & \text{otherwise.}
\end{cases}$$
%

\noindent
The following lemma shows that, the feasible region of LP($\Psi^{(i+1)}$) is not empty for any $1 \le i < k$.
Therefore the computation for $p^{(i)}$ results in a valid solution for all $1\le i \le k$.

\begin{lemma} \label{lemma-validity-tuple-refinement}
$\big( \bm{x}^{(i)}, \bm{h}^{(i)}|_{\MC{E}^{(i+1)}} \big)$ is feasible for LP($\Psi^{(i+1)}$), for any $1\le i < k$.
\end{lemma}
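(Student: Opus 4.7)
The plan is to verify each of the four constraint families (\ref{LP_cdh_e})--(\ref{LP_cdh_ev}) of LP($\Psi^{(i+1)}$) by leveraging the feasibility of $p^{(i)}$ for LP($\Psi^{(i)}$) together with the explicit form of the refinement that produces $\Psi^{(i+1)}$ from $\Psi^{(i)}$. To set up the bookkeeping cleanly, I will introduce $F^{(i)}_v := \{ e \in \MC{E}^{(i)} \setminus \MC{E}^{(i+1)} \colon v \in T(e) \}$, the set of edges folded into $v$ during the $i$-th iteration, so that the refinement rule reads $c^{(i+1)}_v = c^{(i)}_v - \sum_{e \in F^{(i)}_v} d_e$ and $\ell^{(i+1)}_v = x^{(i)}_v$ whenever $F^{(i)}_v \neq \emptyset$, and leaves both parameters unchanged otherwise.

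The constraints (\ref{LP_cdh_e}), (\ref{LP_cdh_ev}), and (\ref{LP_cdh_mv}) can be disposed of almost immediately. For each $e \in \MC{E}^{(i+1)} \subseteq \MC{E}^{(i)}$ the extended assignment $\bm{h}^{(i)}|_{\MC{E}^{(i+1)}}$ agrees with $\bm{h}^{(i)}$, so both the edge-covering equation and the pointwise bounds $0 \le h^{(i)}_{e,v} \le x^{(i)}_v$ are inherited verbatim from $p^{(i)}$. The multiplicity bound follows from the inherited inequality $x^{(i)}_v \le m_v$ together with $\ell^{(i+1)}_v \le x^{(i)}_v$, which is exactly the sandwich supplied by Proposition~\ref{prop-basic-properties-tight-vc-hc}.

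The core step is the capacity constraint (\ref{LP_cdh_v}), which I plan to establish by subtracting, from the step-$i$ inequality $\sum_{e \in \MC{E}^{(i)}[v]} d_e h^{(i)}_{e,v} \le c^{(i)}_v \cdot x^{(i)}_v$, the contributions of the folded edges incident to $v$. For each $e \in F^{(i)}_v$ the defining identity $h^{(i)}_{e,v} = x^{(i)}_v$ peels off exactly $d_e \cdot x^{(i)}_v$ from the left-hand side, which is precisely compensated by the $d_e \cdot x^{(i)}_v$ lost on the right-hand side through the capacity update; for a folded edge incident to $v$ with $v \notin T(e)$, the left-hand side decreases by the non-negative quantity $d_e h^{(i)}_{e,v}$ while the right-hand side is untouched. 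Both cases preserve the inequality, yielding (\ref{LP_cdh_v}) for $\Psi^{(i+1)}$.

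The main subtlety, and in my view the only obstacle worth highlighting, is the implicit well-definedness requirement that $c^{(i+1)}_v \ge 0$, which is needed for LP($\Psi^{(i+1)}$) to be a legitimate relaxation of the instance. I will verify this by noting that whenever $F^{(i)}_v$ is non-empty the membership condition of $T(e)$ forces $x^{(i)}_v > 0$, so dividing the chain $\sum_{e \in F^{(i)}_v} d_e \cdot x^{(i)}_v = \sum_{e \in F^{(i)}_v} d_e h^{(i)}_{e,v} \le \sum_{e \in \MC{E}^{(i)}[v]} d_e h^{(i)}_{e,v} \le c^{(i)}_v \cdot x^{(i)}_v$ through by $x^{(i)}_v$ gives $\sum_{e \in F^{(i)}_v} d_e \le c^{(i)}_v$, i.e., $c^{(i+1)}_v \ge 0$. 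With this non-negativity in hand, the four constraint checks above close the argument.
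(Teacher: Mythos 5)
Your proof is correct and follows essentially the same route as the paper: the constraints~(\ref{LP_cdh_e}), (\ref{LP_cdh_ev}), and (\ref{LP_cdh_mv}) are inherited directly (the last via Proposition~\ref{prop-basic-properties-tight-vc-hc}), and the capacity constraint~(\ref{LP_cdh_v}) is handled by peeling off the folded edges using $h^{(i)}_{e,v}=x^{(i)}_v$ for $v\in T(e)$, exactly as in the paper's expansion with $t^{(i)}(v)$. Your additional verification that $c^{(i+1)}_v\ge 0$, obtained by dividing the aggregated inequality by $x^{(i)}_v>0$, is a sound and welcome supplement, since the paper's corresponding remark only argues $d_e\le c'_v$ for a single folded edge rather than for the sum over all edges folded into $v$ in one round.
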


\begin{proof}
We show that, if $p^{(i)}$ is feasible for LP($\Psi^{(i)}$), then the solution $\big( \bm{x}^{(i)}, \bm{h}^{(i)}|_{\MC{E}^{(i+1)}} \big)$ will be feasible for LP($\Psi^{(i+1)}$).
Note that, the validity of the base case, $i = 1$, follows from the assumption that the input instance is feasible.

\smallskip

\noindent
In the following, we show that $\big( \bm{x}^{(i)}, \bm{h}^{(i)}|_{\MC{E}^{(i+1)}} \big)$ does not violate the constraints in LP($\Psi^{(i+1)}$).
\begin{itemize}
	\item
		Since $\MC{E}^{(i+1)} \subseteq \MC{E}^{(i)}$, it follows that $\big( h^{(i)}|_{\MC{E}^{(i+1)}} \big)_{e,v} = h^{(i)}_{e,v}$ for all $e \in \MC{E}^{(i+1)}$ and all $v \in e$.
		Therefore the constraints~(\ref{LP_cdh_e}) and~(\ref{LP_cdh_ev}) hold directly for all $e \in \MC{E}^{(i+1)}$ and $v \in e$.

	\item
		Consider the constraint~(\ref{LP_cdh_mv}).
		We have $x^{(i)}_v \le m_v$ since $p^{(i)}$ is feasible for LP($\Psi^{(i)}$).
		By Proposition~\ref{prop-basic-properties-tight-vc-hc} we know that $\ell^{(i+1)}_v \le x^{(i)}_v$.
		Hence the constraint~(\ref{LP_cdh_mv}) holds for all $v \in V$.
		
	\item
		It remains to verify that the constraint~(\ref{LP_cdh_v}).
		For any $v \in V$, let $t^{(i)}(v)$ denote the set of edges in $\MC{E}^{(i)} \setminus \MC{E}^{(i+1)}$ that support $v$ in $p^{(i)}$.
		It follows that $$c^{(i)}_v - \sum_{e \in t^{(i)}(v)} d_e \enskip \le \enskip c^{(i+1)}_v.$$
		Expanding constraint~(\ref{LP_cdh_v}) for $v$, we have
		\begin{align}
		\sum_{e \in \MC{E}^{(i+1)}[v]}d_e \cdot {\big( h^{(i)}|_{\MC{E}^{(i+1)}} \big)}_{e,v}
		\enskip & = \enskip \sum_{e \in \MC{E}^{(i)}[v]}d_e \cdot h^{(i)}_{e,v} \enskip - \enskip \sum_{e \in \MC{E}^{(i)}[v] \setminus \MC{E}^{(i+1)}[v]}d_e \cdot h^{(i)}_{e,v}  \notag \\[4pt]
		& \le \enskip \sum_{e \in \MC{E}^{(i)}[v]}d_e \cdot h^{(i)}_{e,v} \enskip - \enskip \sum_{e \in t^{(i)}(v)}d_e \cdot h^{(i)}_{e,v}  \notag \\[2pt]
		& \le \enskip \Bigg( c^{(i)}_v - \sum_{e \in t^{(i)}(v)}d_e \Bigg) \cdot x^{(i)}_v  \notag \\[4pt]
		& \le \enskip c^{(i+1)}_v \cdot x^{(i)}_v. \notag
		\end{align}
		%
		Therefore the constraint~(\ref{LP_cdh_v}) holds for all $v \in V$ and this lemma is proved.
\end{itemize}
\vspace{-22pt}
\end{proof}

\medskip

Let $\bm{h}'$ denote the assignment function defined in the algorithm for the edges that have been folded.
The following lemma shows that the output solution $(\bm{x}^*, \bm{h}^*)$ is feasible for LP($\Psi_0$).

\begin{lemma} \label{lemma-folding-validity}
%
%
The solution $p = \big( \vphantom{\dfrac{}{}} \hspace{2pt} \CEIL{ \bm{x}^{(k)} }, \hspace{2pt} \bm{h}^{(k)}|_E + \bm{h}'|_E \hspace{2pt} \big)$ is feasible for LP($\Psi_0$).
\end{lemma}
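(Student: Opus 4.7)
The plan is to verify each family of constraints of LP$(\Psi_0)$ for the output $p = (\lceil \bm{x}^{(k)}\rceil, \, \bm{h}^{(k)}|_E + \bm{h}'|_E)$ in turn, using Lemma~\ref{lemma-validity-tuple-refinement} to import feasibility from the last working LP$(\Psi^{(k)})$ and tracking how edge-foldings have gradually transferred capacity from the working LP into the combinatorial $\bm{h}'$-assignment.

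Three of the four constraint families go through quickly and I would dispatch them first. For the edge-cover equality~(\ref{LP_cdh_e}), each $e \in \MC{E}^{(k)}$ is covered by $p^{(k)}$, and each folded edge is covered by $\bm{h}'$, which by construction is concentrated on a single chosen endpoint. For the multiplicity bound~(\ref{LP_cdh_mv}), integrality of $m_v$ together with $x^{(k)}_v \le m_v$ gives $\lceil x^{(k)}_v \rceil \le m_v$. For the box constraint~(\ref{LP_cdh_ev}), the case $h^{(k)}_{e,v} \le x^*_v$ is inherited from LP$(\Psi^{(k)})$; for a folded edge with $h'_{e,v} = 1$ the required $x^*_v \ge 1$ follows from Proposition~\ref{prop-basic-properties-tight-vc-hc}, since at the folding iteration $i$ we had $v \in T(e)$ with $x^{(i)}_v \ge 1/f$ and the update $\ell_v \leftarrow x^{(i)}_v$ propagates monotonically, so $x^{(k)}_v \ge 1/f$ and $\lceil x^{(k)}_v \rceil \ge 1$.

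The substantive part is the capacity constraint~(\ref{LP_cdh_v}). For each $v$, I would introduce $S(v)$, the set of folded edges whose $T(e)$ contained $v$ at the folding time, and $F(v) \subseteq S(v)$, the edges that were actually assigned to $v$ by $\bm{h}'$. Two facts are needed: (a) $c^{(k)}_v = c_v - \sum_{e \in S(v)} d_e \ge 0$, and (b) $F(v) \subseteq S(v)$. Fact (a) is obtained by a small induction on iterations, using constraint~(\ref{LP_cdh_v}) in LP$(\Psi^{(i)})$ divided by $x^{(i)}_v > 0$ to verify that the joint decrement $\sum_{e:\, v \in T(e)} d_e$ in iteration $i$ does not exceed $c^{(i)}_v$. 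Combining these with the feasibility of $p^{(k)}$ yields
\[
\sum_{e \in E[v]} d_e \, h^*_{e,v} \;\le\; c^{(k)}_v \, x^{(k)}_v + \bigl( c_v - c^{(k)}_v \bigr),
\]
after which it suffices to check $c^{(k)}_v x^{(k)}_v + c_v - c^{(k)}_v \le c_v \lceil x^{(k)}_v \rceil$ by a short three-way split on whether $x^{(k)}_v$ is zero, in $(0,1]$, or above $1$; the first case is trivial because $x^{(k)}_v = 0$ together with Proposition~\ref{prop-basic-properties-tight-vc-hc} compels $S(v) = \emptyset$ and hence $c^{(k)}_v = c_v$, while the other two cases reduce to $c^{(k)}_v(x^{(k)}_v - 1) \le 0$ and a one-line use of $c_v \ge c^{(k)}_v$ with $x^{(k)}_v > 1$, respectively.

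The main obstacle I anticipate is the capacity bookkeeping, specifically cleanly distinguishing $F(v)$ from $S(v)$. In a single iteration the algorithm may fold several edges simultaneously and credit the capacity decrement $d_e$ against \emph{every} vertex in $T(e)$, even though only one of them later receives $h'_{e,v} = 1$. Making the relation between these two collections precise, and propagating it across iterations, is what allows the displayed inequality to absorb the entire $\bm{h}'$-contribution within the capacity the algorithm has actually spent.
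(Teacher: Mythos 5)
Your proposal is correct and follows essentially the same route as the paper: verify each constraint family separately, use the monotonicity of $\bm{\ell}$ (Proposition~\ref{prop-basic-properties-tight-vc-hc}) for constraints~(\ref{LP_cdh_mv}) and~(\ref{LP_cdh_ev}) and for the $x^{(k)}_v=0$ case, and for~(\ref{LP_cdh_v}) absorb the $\bm{h}'$-contribution into the capacity $c_v - c^{(k)}_v$ already subtracted by the foldings, together with feasibility of $p^{(k)}$ for LP($\Psi^{(k)}$). Your explicit bookkeeping of $F(v)\subseteq S(v)$ and the nonnegativity $c^{(k)}_v\ge 0$ only makes precise steps the paper uses implicitly (its single chain $c^{(k)}_v x^{(k)}_v + \sum_{e: h'_{e,v}=1}d_e \le (c^{(k)}_v+\sum d_e)\lceil x^{(k)}_v\rceil \le c_v\lceil x^{(k)}_v\rceil$ is your case split in disguise).
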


\begin{proof}
We show that $p$ does not violate the constraints in LP($\Psi_0$).
\begin{enumerate}
	\item
		For the constraint~(\ref{LP_cdh_e}), it suffices to see that: (i)~For any $e \in \MC{E}^{(k)}$, $(h'|_E)_{e,v} = 0$ for all $v \in e$.
		Therefore $$\sum_{v \in e}\big( h^{(k)}|_E + h'|_E \big)_{e,v} \enskip = \enskip \sum_{v \in e}h^{(k)}_{e,v} \enskip = \enskip 1.$$
		%
		(ii)~For any $e \in E \setminus \MC{E}^{(k)}$, we know that $(h^{(k)}|_E)_{e,v} = 0$ for all $v \in e$ and $\sum_{v \in e}h'_{e,v} = 1$ by the definition of $h'$.
		%
		Hence constraint~(\ref{LP_cdh_e}) holds for all $e \in E$.
		
	\item
		Consider the constraint~(\ref{LP_cdh_v}) for any $v \in V$.
		%
		We consider two cases.

		\smallskip

		(i) If $x^{(k)}_v = 0$, then we have $h^{(k)}_{e,v} = 0$ for all $e \in \MC{E}^{(k)}$.

		Since $\ell^{(1)}_v = 0$, by Proposition~\ref{prop-basic-properties-tight-vc-hc} we have $\ell^{(i)}_v = 0$ for all $1\le i \le k$.
		This means that $v$ is not supported by any edge in $p^{(i)}$ for all $1\le i \le k$.
		%
		Hence $(h'|_E)_{e,v} = 0$ for all $e \in E[v]$.
		Therefore, 
		$$\sum_{e \in E[v]}d_e \cdot \big( h^{(k)}|_E + h'|_E \big)_{e,v} \enskip = \enskip 0 \enskip = \enskip c_v \cdot \CEIL{x^{(k)}_v},$$
		and the constraint~(\ref{LP_cdh_v}) holds for $v$.

		\smallskip

		(ii) Consider the case $x^{(k)}_v > 0$.
		We have
		\begin{align*}
		\sum_{e \in E[v]}d_e \cdot \big( h^{(k)}|_E + h'|_E \big)_{e,v} 
		\enskip & = \enskip \sum_{e \in \MC{E}^{(k)}[v]}d_e \cdot h^{(k)}_{e,v} \enskip + \enskip \sum_{e \in E[v] \setminus \MC{E}^{(k)}}d_e\cdot h'_{e,v} \\[4pt]
		& \le \enskip c^{(k)}_v \cdot x^{(k)}_v \enskip + \enskip \sum_{e \in E[v] \setminus \MC{E}^{(k)} \colon h'_{e,v} = 1}d_e \\[1pt]
		& \le \enskip \Bigg( c^{(k)}_v + \sum_{e \in E[v] \setminus \MC{E}^{(k)} \colon h'_{e,v} = 1}d_e \Bigg) \cdot \CEIL{x^{(k)}_v} \enskip  \le \enskip c_v \cdot \CEIL{x^{(k)}_v}.
		\end{align*}
		%
		%
		Therefore the constraint~(\ref{LP_cdh_v}) holds for $v$ as well.

	\item
		Consider the constraint~(\ref{LP_cdh_mv}) for any $v \in V$.
		By Proposition~\ref{prop-basic-properties-tight-vc-hc}, we know that $\ell^{(i)}_v$ is non-decreasing on $i$.
		Therefore it follows that 
		$$\CEIL{x^{(k)}_v} \enskip \ge \enskip x^{(k)}_v \enskip \ge \enskip \ell^{(k)}_v \enskip \ge \enskip \ell^{(1)}_v \enskip = \enskip 0.$$
		On the other hand, since $m_v$ is integral and since $x^{(k)}_v \le m_v$, it follows that $\CEIL{x^{(k)}_v} \le m_v$ as well.
		Hence constraint~(\ref{LP_cdh_mv}) holds for all $v \in V$.

	\item
		Consider the constraint~(\ref{LP_cdh_ev}) and any $e \in E$, $v \in e$.

		If $e \in \MC{E}^{(k)}$, then we have $\big( h^{(k)}|_E + h'|_E \big)_{e,v} = \hspace{2pt} h^{(k)}_{e,v} \hspace{2pt} \le \hspace{2pt} x^{(k)}_v \hspace{2pt} \le \hspace{2pt} \CEIL{x^{(k)}_v}$ and the constraint~(\ref{LP_cdh_ev}) holds naturally for $e,v$.

		If $e \notin \MC{E}^{(k)}$, then $\big( h^{(k)}|_E + h'|_E \big)_{e,v} = h'_{e,v}$. 
		%
		%
		By the definition of $h'$, we know that $h'_{e,v} > 0$ implies that $0 < h^{(i)}_{e,v} = x^{(i)}_v$ and $\ell^{(i)}_v = x^{(i)}_v$ for some $1 \le i < k$. 
		Therefore $$h'_{e,v} \enskip = \enskip 1 \enskip \le \enskip \CEIL{x^{(i)}_v} \enskip = \enskip \Big\lceil \ell^{(i)}_v \Big\rceil \enskip \le \enskip \Big\lceil \ell^{(k)}_v \Big\rceil \enskip \le \enskip \CEIL{x^{(k)}_v}.$$
		%
		%
		On the other hand, if $h'_{e,v} = 0$, then constraint~(\ref{LP_cdh_ev}) holds trivially.
		In both cases, we know that it holds for $e$ and $v$.
\end{enumerate}
%
\vspace{-21pt}
\end{proof}

\medskip


\subsection{Approximation Guarantee}
\label{sec-tight-vc-hc-analysis}

In this section we show that $(\bm{x}^*, \bm{h}^*)$ gives an $f$-approximation.
Since $x^*_v = \CEIL{x^{(k)}_v}$ for all $v \in V$, we will prove in Lemma~\ref{lemma-rounding-bound-overall} that 
$$\sum_{v \in V} \CEIL{x^{(k)}_v} \enskip \le \enskip f\cdot \sum_{v \in V}x^{(1)}_v.$$

\smallskip

Consider the two sets $I^{(k)}$ and $D := \BIGBP{ \hspace{2pt} v \hspace{2pt} \colon \hspace{2pt} v \in V, \enskip 1 < x^{(k)}_v < m_v \hspace{2pt}}.$
By their definitions and Proposition~\ref{prop-basic-properties-tight-vc-hc}, we know that $I^{(k)}$ and $D$ are mutually disjoint, and they consist of only non-extremal vertices.
Furthermore, $I^{(k)}$ is not supporting and $D$ is not supported.
%

%
%
%
%
%

\medskip

\noindent
Therefore, by Lemma~\ref{thm-structural-mapping}, there exists a mapping 
\begin{equation}
\Gamma \colon \enskip I^{(k)} \enskip 
\mapsto \enskip \Big( \MC{E}^{(k)} \Big)^{\OP{actv}}_{h^{(k)}} \Big[ I^{(k)} \Big] 
\hspace{2pt} \setminus \hspace{2pt} \Big( \MC{E}^{(k)} \Big)^{\OP{actv}}_{h^{(k)}}\Big[D\Big]
\label{eq-image-gamma-mapping}
\end{equation}
such that: (i) $v\in \Gamma(v)$ for all $v \in I^{(k)}$, \enskip (ii) for any $u,v \in I^{(k)}$ with $u\neq v$, we have
\begin{equation}
\BIGP{\vphantom{\bigcup} \Gamma(u) \setminus I^{(k)}} \hspace{2pt} \cap \hspace{2pt} \BIGP{\vphantom{\bigcup} \Gamma(v) \setminus I^{(k)}} \enskip = \enskip \emptyset.
\label{eq-rounding-bound-partition-of-U}
\end{equation}
See also Figure~\ref{figure-rounding-bound-partition} for an illustration.
Note that, since we have $x^{(k)}_v < 1/f$ for all $v \in I^{(k)}$,
it follows that $$\sum_{v \in e \cap I^{(k)}} h^{(k)}_{e,v} < 1, \quad \text{and therefore} \quad e^{\OP{actv}}_{h^{(k)}} \not\subseteq I^{(k)} \quad \text{for all $e \in \MC{E}^{(k)}$.}$$
Hence it follows that $\Gamma(v) \setminus I^{(k)} \neq \emptyset$ for all $v \in I^{(k)}$.
In the following we will use the vertices in $\Gamma(v) \setminus I^{(k)}$ to help absorb the rounding cost of $v$.
%
%

\begin{figure*}[tp]
\centering
\includegraphics[scale=0.7]{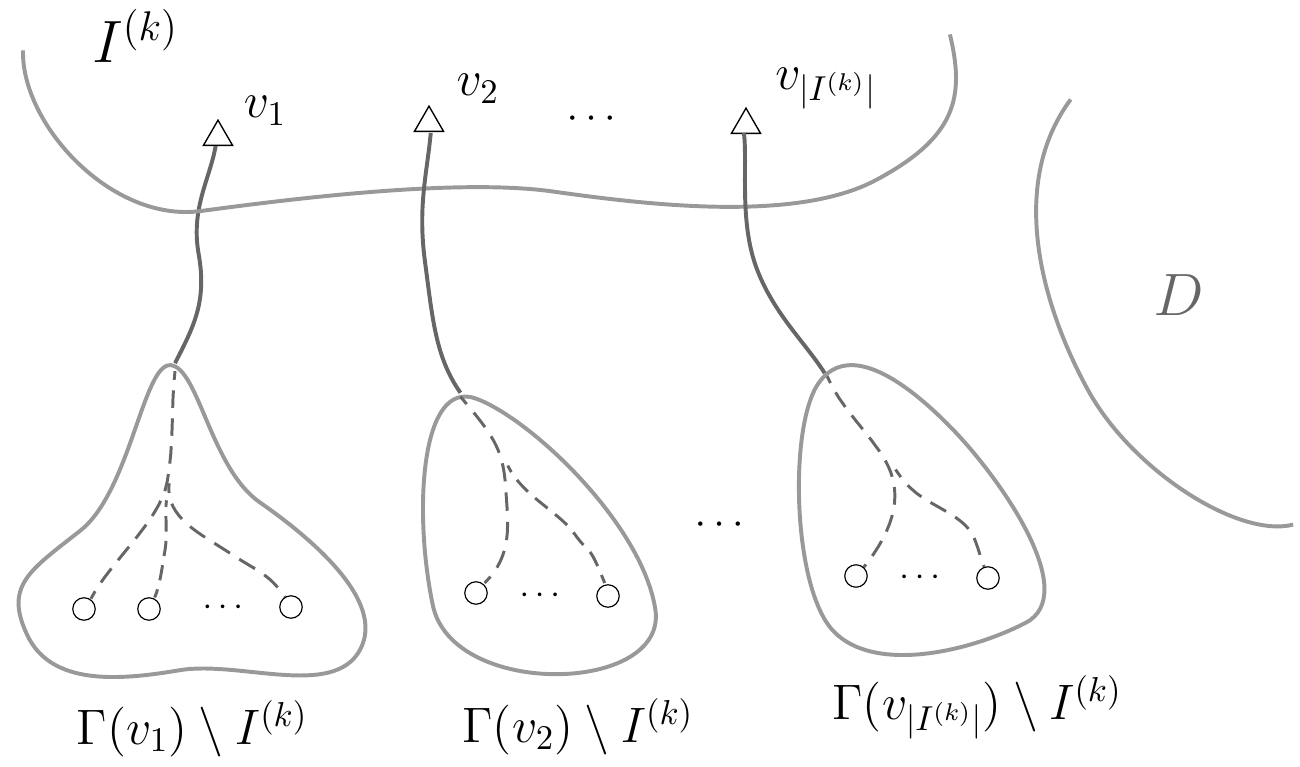}
\caption{An illustration on the partition given by $\Gamma$.}
\label{figure-rounding-bound-partition}
\end{figure*}

\medskip

For any $v \in I^{(k)}$, define $\pi(v) := \BIGP{ \vphantom{\dfrac{}{}} \hspace{2pt} \Gamma(v) \setminus I^{(k)} \hspace{2pt}} \cup \{\vphantom{\cup}v\}$.
%
%
%
%
The following lemma bounds the rounding cost for the vertices included by an individual $\pi(v)$.

%

\begin{lemma} \label{lemma-rounding-bound-single-vertex}
Provided that $f \ge 2$, we have
$$\sum_{u \in \pi(v)}\CEIL{x^{(k)}_u} \enskip \le \enskip f\cdot \sum_{u \in \pi(v)}x^{(k)}_u, \quad \text{for any $v \in I^{(k)}$.}$$

\end{lemma}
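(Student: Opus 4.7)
My plan is to case-split on whether any vertex in $\Gamma(v) \setminus I^{(k)}$ saturates its multiplicity cap. The two driving ingredients are the size bound $|\Gamma(v)| \le f$ (since $\Gamma(v) \subseteq e$ for some $e \in \MC{E}^{(k)}$) and the conclusion $\Gamma(v) \cap D = \emptyset$ of Lemma~\ref{thm-structural-mapping}. Because $x^{(k)}_u \le m_u$ for all $u$, the latter forces every $u \in \Gamma(v) \setminus I^{(k)}$ into exactly one of two regimes: either \emph{bounded} ($x^{(k)}_u \le 1$), or \emph{capped} ($x^{(k)}_u = m_u$, in which case $m_u \ge 1$ since $u \notin I^{(k)}$ combined with integrality rules out $m_u = 0$). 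I also record $\lceil x^{(k)}_v \rceil = 1$ for the distinguished vertex $v \in I^{(k)}$.

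In the \emph{capped} case, I would exhibit a single $u^* \in \Gamma(v) \setminus I^{(k)}$ with $x^{(k)}_{u^*} = m_{u^*} \ge 1$ and observe that $f \cdot x^{(k)}_{u^*} - \lceil x^{(k)}_{u^*} \rceil = (f-1)\,m_{u^*} \ge 1$ for $f \ge 2$. This one unit of slack absorbs the $\lceil x^{(k)}_v \rceil = 1$ charge coming from $v$. Every other $u \in \pi(v) \setminus \{v, u^*\}$ already contributes non-negative slack: either $x^{(k)}_u \ge 1/f$ with $\lceil x^{(k)}_u \rceil = 1$ in the bounded subcase, or $f \cdot m_u \ge 2 m_u \ge m_u$ in the capped subcase.

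In the \emph{all-bounded} case (every $u \in \Gamma(v) \setminus I^{(k)}$ has $x^{(k)}_u \le 1$), I would pick $e \in \MC{E}^{(k)}$ with $e^{\OP{actv}}_{h^{(k)}} = \Gamma(v)$ and chain constraints~(\ref{LP_cdh_e}) and~(\ref{LP_cdh_ev}) to get
$$\sum_{u \in \Gamma(v)} x^{(k)}_u \enskip \ge \enskip \sum_{u \in \Gamma(v)} h^{(k)}_{e,u} \enskip = \enskip 1.$$
Subtracting the contribution from $(\Gamma(v) \cap I^{(k)}) \setminus \{v\}$, each summand strictly below $1/f$, and using the partition identity $|\Gamma(v)| = |(\Gamma(v) \cap I^{(k)}) \setminus \{v\}| + |\pi(v)|$ together with $|\Gamma(v)| \le f$, I would derive $\sum_{u \in \pi(v)} x^{(k)}_u \ge |\pi(v)|/f$. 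Since every $u \in \pi(v)$ has $x^{(k)}_u \le 1$ here (including $v$ itself), each ceiling is at most $1$, so $\sum_{u \in \pi(v)} \lceil x^{(k)}_u \rceil \le |\pi(v)| \le f \cdot \sum_{u \in \pi(v)} x^{(k)}_u$.

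The main obstacle is the all-bounded case: it is the one that genuinely consumes the geometric content of the separation lemma, since the bound $|\Gamma(v)| \le f$ has to balance against the threshold $1/f$ defining $I^{(k)}$ so that the loss from the small neighbors of $v$ in $\Gamma(v)$ is exactly recovered. The capped case is essentially arithmetic and exploits the exclusion of $D$ from the image of $\Gamma$, which, as the authors emphasize, is precisely why the ratio remains tight at $f = 2$.
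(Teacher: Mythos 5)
Your proposal is correct and matches the paper's proof in structure: the same dichotomy (some vertex in $\pi(v)$ hits its cap $m_u$ versus every $x^{(k)}_u\le 1$), the same appeal to $\Gamma(v)\cap D=\emptyset$ to force that dichotomy, and the same interplay of $|\Gamma(v)|\le f$ against the $1/f$ threshold in the all-bounded case, with your slack accounting in the capped case being a rearrangement of the paper's $1+m_u\le f\big(x^{(k)}_v+x^{(k)}_u\big)$. One small imprecision worth fixing: $m_{u^*}\ge 1$ follows because $u^*$ lies in the active subedge $\Gamma(v)$, hence $x^{(k)}_{u^*}\ge h^{(k)}_{e,u^*}>0$, not merely from $u^*\notin I^{(k)}$ together with integrality (which by itself would allow $x^{(k)}_{u^*}=m_{u^*}=0$).
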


\begin{proof}
From the Condition~(\ref{eq-image-gamma-mapping}) above, we know that $\Gamma(v)$ contains no vertices in $D$.
%
%
Since $\pi(v) \subseteq \Gamma(v)$, it follows that $\pi(v) \cap D = \emptyset$ as well.
Depending on the elements of $\pi(v)$, we consider two cases.

\smallskip

\begin{itemize}
	\item
		If there exists some $u \in \pi(v)$ with $x^{(k)}_u = m_u$, then we have $$\CEIL{x^{(k)}_v} + \CEIL{x^{(k)}_u} \enskip = \enskip 1 + m_u \enskip \le \enskip f\cdot \Big( \hspace{2pt} x^{(k)}_v + x^{(k)}_u \hspace{2pt} \Big),$$
		since $f \ge 2$ and $m_u \ge 1$.
		For the remaining vertices $u' \in \pi(v) \setminus \{ u \cup v \}$, we know that $x^{(k)}_{u'} \ge 1/f$ since $u' \in \Gamma(v) \setminus I^{(k)}$ by definition of $\pi$.
		Hence $\vphantom{\cfrac{}{}} \CEIL{x^{(k)}_{u'}} \le f\cdot x^{(k)}_{u'}$, and the statement of this lemma holds.

	\smallskip
	
	\item
		Consider the other case that $x^{(k)}_u \neq m_u$ for all $u \in \pi(v)$.
		Then we know that $x^{(k)}_u \le 1$ for all $u \in \pi(v)$ and hence
		\begin{equation}\sum_{u \in \pi(v)}\CEIL{x^{(k)}_u} 
\enskip = \enskip \BIGC{\vphantom{\bigcup} \pi(v)}.
		\label{eq-rounding-bound-per-vertex-pi-size}
		\end{equation}
		In the following we bound $\BIGC{\vphantom{\bigcup} \pi(v)}$.
Let $e \in \MC{E}^{(k)}[v]$ be an edge such that $\Gamma(v) = e^{\OP{actv}}_{h^{(k)}}$.
%
Such an edge exists since $\Gamma(v)$ is the active subedge of certain edge in $\MC{E}^{(k)}[I^{(k)}]$ which also contains $v$.
%
It follows that $\pi(v) \subseteq \Gamma(v) \subseteq e$. 

\smallskip

We have
\begin{align}
\sum_{u \in \pi(v)}x^{(k)}_u 
\enskip \ge \enskip \sum_{u \in \pi(v)}h^{(k)}_{e,u} 
\enskip = \enskip 1 - \sum_{u \in \Gamma(v) \setminus \pi(v)}h^{(k)}_{e,u} \label{eq-rounding-bound-per-vertex-edge-feasibility} 
\end{align}
where the first inequality follows from constraint~(\ref{LP_cdh_ev}) and the second equality follows from constraint~(\ref{LP_cdh_e}) for $e$.
%

\smallskip

Since $\Gamma(v) \setminus \pi(v) \subseteq I^{(k)}$ by the definition of $\pi(v)$,
it follows that
\begin{align}
1 - \sum_{u \in \Gamma(v) \setminus \pi(v)}h^{(k)}_{e,u} 
\enskip & > \enskip 1 - \frac{1}{f}\cdot \hspace{2pt} \BIGC{\vphantom{\bigcup} \hspace{2pt} \Gamma(v) \setminus \pi(v) \hspace{2pt}} \hspace{4pt}
\enskip \ge \enskip \frac{1}{f} \cdot \BIGC{\vphantom{\bigcup} \pi(v)},
\label{ieq-rounding-bound-per-vertex-size-Uv}
\end{align}
where in the second inequality we use the fact that
$\BIGC{\Gamma(v)} \le \BIGC{e} \le f$, which implies that $1 - \frac{1}{f}\cdot \BIGP{ \hspace{2pt} \BIGC{\Gamma(v)} - \BIGC{\pi(v)} \hspace{2pt} }\ge \frac{1}{f}\cdot\BIGC{\pi(v)}$.
%
%
%
%
Combining~(\ref{eq-rounding-bound-per-vertex-pi-size}),~(\ref{eq-rounding-bound-per-vertex-edge-feasibility}), and~(\ref{ieq-rounding-bound-per-vertex-size-Uv}), we obtain
$$\sum_{u \in \pi(v)}\CEIL{x^{(k)}_u} 
\enskip = \enskip \BIGC{\vphantom{\bigcup} \pi(v)} 
\enskip \le \enskip f\cdot \sum_{u \in \pi(v)}x^{(k)}_u$$
as claimed.
\end{itemize}
\vspace{-22pt}
\end{proof}

\noindent
The following lemma establishes the approximation guarantee for the solution $(x^*, h^*)$.

\begin{lemma} \label{lemma-rounding-bound-overall}
Provided that $f \ge 2$, we have
$$\sum_{u \in V}\CEIL{x^{(k)}_u} \enskip \le \enskip f\cdot \sum_{u \in V}x^{(1)}_u.$$
\end{lemma}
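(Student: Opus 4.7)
The plan is to establish this inequality in two conceptual steps: first a \emph{rounding bound} of the form $\sum_u \lceil x^{(k)}_u \rceil \leq f \sum_u x^{(k)}_u$, and then \emph{LP monotonicity} $\sum_u x^{(k)}_u \leq \sum_u x^{(1)}_u$. The second step is almost immediate from Lemma~\ref{lemma-validity-tuple-refinement}: that lemma shows $\big(\bm{x}^{(i)}, \bm{h}^{(i)}|_{\MC{E}^{(i+1)}}\big)$ is feasible for LP$(\Psi^{(i+1)})$, so by optimality of $p^{(i+1)}$ the objective is non-increasing across iterations, and iterating from $i = 1$ to $k-1$ yields the claim.

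For the rounding bound, I would let $P := \bigcup_{v \in I^{(k)}} \pi(v)$ and first observe that the family $\{\pi(v)\}_{v \in I^{(k)}}$ is pairwise disjoint: for distinct $u,v \in I^{(k)}$, the disjointness of $\pi(u) \setminus I^{(k)}$ and $\pi(v) \setminus I^{(k)}$ is exactly~(\ref{eq-rounding-bound-partition-of-U}), while the only $I^{(k)}$-elements in $\pi(u)$ and $\pi(v)$ are $u$ and $v$ themselves. Summing Lemma~\ref{lemma-rounding-bound-single-vertex} over $v \in I^{(k)}$ then gives $\sum_{u \in P}\lceil x^{(k)}_u \rceil \leq f \sum_{u \in P} x^{(k)}_u$. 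For the remaining vertices $u \in V \setminus P$, note that $I^{(k)} \subseteq P$, so $u \notin I^{(k)}$ and hence $x^{(k)}_u$ is either zero or at least $1/f$. A short case analysis handles both subcases: if $1/f \leq x^{(k)}_u \leq 1$, then $\lceil x^{(k)}_u \rceil = 1 \leq f x^{(k)}_u$; if $x^{(k)}_u > 1$, then $\lceil x^{(k)}_u \rceil \leq x^{(k)}_u + 1 \leq 2 x^{(k)}_u \leq f x^{(k)}_u$ since $f \geq 2$. Combining the two contributions yields the rounding bound on all of $V$, and composing with LP monotonicity finishes the proof.

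Since this lemma is essentially an assembly step, the real \emph{obstacle} to articulate is conceptual rather than technical: one must recognize that vertices in $D = \{v : 1 < x^{(k)}_v < m_v\}$ are forced to lie outside $P$ by the image restriction of $\Gamma$ in Lemma~\ref{thm-structural-mapping}. Without this exclusion, a $D$-vertex $u$ absorbed into some $\pi(v)$ would contribute $\lceil x^{(k)}_v \rceil + \lceil x^{(k)}_u \rceil \geq 2$ against a fractional mass only marginally exceeding $x^{(k)}_u > 1$, spoiling the bound for $f = 2$. Because the separation lemma removes $D$ from $\Gamma$'s image, such vertices are instead handled in the ``outside $P$'' case by the $\lceil x \rceil \leq 2x$ estimate for $x \geq 1$, which is precisely what yields the tight factor $f$ for normal graphs.
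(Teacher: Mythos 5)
Your proposal is correct and follows essentially the same two-step structure as the paper's own proof: disjointness of the $\pi(v)$ sets plus Lemma~\ref{lemma-rounding-bound-single-vertex} gives the rounding bound $\sum_u \lceil x^{(k)}_u \rceil \le f \sum_u x^{(k)}_u$ (with the paper leaving the $\lceil x\rceil \le fx$ estimate for $x\ge 1/f$ implicit where you spell out the two subcases), and the feasibility chain from Lemma~\ref{lemma-validity-tuple-refinement} gives $\sum_u x^{(k)}_u \le \sum_u x^{(1)}_u$. The closing remark about why the exclusion of $D$ from $\Gamma$'s image matters for $f=2$ accurately reflects the motivating discussion in Section~\ref{subsec-approx-ratio}.
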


\begin{proof}
From
the definition of $\pi$, we know that $\pi(u) \cap \pi(v) = \emptyset$ for any $u,v \in I^{(k)}$ with $u \neq v$.
%
%
%
%
%
%
%
%
Furthermore, for any $u \in V \setminus \bigcup_{v \in I^{(k)}}\pi(v)$, we have 
$x^{(k)}_u \ge {1}/{f}$ whenever $x^{(k)}_u > 0$.

\smallskip

Therefore, combining with Lemma~\ref{lemma-rounding-bound-single-vertex}, we have
\begin{align*}
\sum_{u \in V} \CEIL{x^{(k)}_u} 
\enskip & = \enskip \sum_{v \in I^{(k)}} \sum_{u \in \pi(v)} \CEIL{x^{(k)}_u} \quad + \sum_{u \in V \setminus \bigcup_{v \in I^{(k)}}\pi(v)} \CEIL{x^{(k)}_u} \\[2pt]
& \le \enskip \sum_{v \in I^{(k)}} f\cdot \sum_{u \in \pi(v)} x^{(k)}_u \enskip + \enskip f\cdot \sum_{u \in V \setminus \bigcup_{v \in I^{(k)}}\pi(v)} x^{(k)}_u \\[2pt]
& \le \enskip f\cdot \sum_{u \in V} x^{(k)}_u.
\end{align*}
By Lemma~\ref{lemma-validity-tuple-refinement}, $\vphantom{\cfrac{}{}} \sum_{u \in V}x^{(i+1)}_u \le \sum_{u \in V}x^{(i)}_u$ holds for all $1\le i < k$, since $x^{(i)}$ is feasible for LP($\Psi^{(i+1)}$) while $x^{(i+1)}$ is optimal for LP($\Psi^{(i+1)}$).
Therefore $$\sum_{u \in V}\CEIL{x^{(k)}_u} \enskip \le \enskip f\cdot \sum_{u \in V}x^{(k)}_u \enskip \le \enskip f\cdot \sum_{u \in V}x^{(1)}_u$$
as claimed.
\end{proof}

\newpage

\section{Proof of Lemma~\ref{thm-structural-mapping}}
\label{sec-proof-separation-lemma}

In this section we prove our main technical tool Lemma~\ref{thm-structural-mapping}.
We remark that, throughout this proof, we will assume the following prerequisite (as stated in the lemma):
\begin{enumerate}
	\item
		$\Psi = (E, \bm{\ell}, \bm{c})$ denotes the considered parameter tuple,
		
	\item
		$p = (\bm{x}, \bm{h})$ is an extreme point solution for $\mathbf{Q}(\Psi)$, and

	\item
		$\MC{I}$ and $\MC{D}$ are two disjoint sets of non-extremal vertices (with reference to the point $p$) such that $\MC{I}$ is not supporting and $\MC{D}$ is not supported.
\end{enumerate}
With the prerequisite above, the proof proceeds as follows.
First in Section~\ref{sec-separation-constraint-matrix} we identify a matrix $\tilde{M}$ from the constraint matrix of $\mathbf{Q}(\Psi)$ according to $\MC{I}$ and $\MC{D}$ together with the set of constraints they have involved in.
%
%
From the matrix $\tilde{M}$ we show in Section~\ref{sec-separation-mapping-from-extremality} that a mapping with nice and restricted behavior can be extracted for the final mapping $\Gamma$ to be defined in Section~\ref{sec-separation-mapping-gamma}.
%

%

\subsection{The constraint matrix of $\mathbf{Q}(\Psi)$ and the matrix $\tilde{M}$}
\label{sec-separation-constraint-matrix}
Let $M$ denote the coefficient matrix for the constraints 
in $\mathbf{Q}(\Psi)$ and $M_{(\OP{Eqs})}$ denote the submatrix formed 
by the constraints that hold with equality at the considered extreme point $p$.
%
%
%
%
The following proposition states an equivalent form 
of a classical characterization on the extremality of $p$ which we will be using throughout the proof. 
%
%

\begin{prop} \label{claim-MID-full-column-rank}
$M_{(Eqs)}$ has a rank equal to the number of variables in $\mathbf{Q}(\Psi)$.
\end{prop}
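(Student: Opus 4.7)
\textbf{Proof plan for Proposition~\ref{claim-MID-full-column-rank}.}

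The plan is to give the textbook characterization of extreme points via tight constraints, specialized to the polytope $\mathbf{Q}(\Psi)$. Let $n$ denote the total number of variables of $\mathbf{Q}(\Psi)$, i.e., $n = |V| + \sum_{e\in \hat{E}}|e|$, so that $M$ and $M_{(\OP{Eqs})}$ have $n$ columns. I will argue the contrapositive: if $\OP{rank}(M_{(\OP{Eqs})}) < n$, then $p$ is not an extreme point of $\mathbf{Q}(\Psi)$, contradicting the prerequisite.

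First, I would record the one standing observation that makes the argument go through: every constraint of $\mathbf{Q}(\Psi)$ is linear in $(\bm{x},\bm{h})$, and for each inequality constraint that is \emph{not} tight at $p$, there is a strictly positive slack. Since there are only finitely many such constraints, the minimum slack $\delta > 0$ is well-defined.

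Next, assume for contradiction that $\OP{rank}(M_{(\OP{Eqs})}) < n$. Then $M_{(\OP{Eqs})}$ has a non-trivial null space, so there exists a non-zero direction $\bm{d} \in \MBB{R}^n$ with $M_{(\OP{Eqs})}\,\bm{d} = \bm{0}$. I would then choose $\epsilon > 0$ small enough so that every non-tight inequality still holds strictly at both $p + \epsilon\bm{d}$ and $p - \epsilon\bm{d}$ (possible by compactness/finite-slack: take $\epsilon < \delta/\|M\bm{d}\|_\infty$ after scaling). By construction, the tight constraints of $\mathbf{Q}(\Psi)$ continue to hold with equality at $p\pm\epsilon\bm{d}$ since $\bm{d}$ lies in the null space of $M_{(\OP{Eqs})}$, so $p \pm \epsilon\bm{d} \in \mathbf{Q}(\Psi)$.

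Finally, writing $p = \tfrac{1}{2}(p+\epsilon\bm{d}) + \tfrac{1}{2}(p-\epsilon\bm{d})$ exhibits $p$ as the midpoint of a non-degenerate line segment in $\mathbf{Q}(\Psi)$, contradicting the assumption that $p$ is an extreme point (see the footnote in the definition of extreme point). Hence $\OP{rank}(M_{(\OP{Eqs})}) = n$, as claimed. The only mildly subtle step is the choice of $\epsilon$, which just uses finiteness of the constraint system; no step should present a real obstacle since the statement is a direct instantiation of the standard rank characterization of basic feasible solutions to the polytope $\mathbf{Q}(\Psi)$.
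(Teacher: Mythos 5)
Your proof is correct: it is the standard perturbation argument (null direction of the tight-constraint matrix, small $\epsilon$ preserving the slack of non-tight inequalities, midpoint contradiction with the footnoted definition of extreme point), and the choice of $\epsilon$ is handled properly. The paper itself offers no proof of this proposition --- it explicitly presents it as ``an equivalent form of a classical characterization on the extremality of $p$'' --- so your write-up simply supplies the classical argument the paper takes for granted.
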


%
%

%
In the following, we first identify the sets of variables and constraints that are relative to $\MC{I}$ and $\MC{D}$.
Then we show that, by properly simplifying the constraints, a matrix $\tilde{M}$ with a full column rank and nice structural property can be identified.
%


%
\paragraph{Relative variables and constraints.}
\noindent
Consider two sets $X$ and $H$ of variables defined as 
$$X := \BIGBP{ \enskip x_v \hspace{2pt} \colon \hspace{2pt} v \in \MC{I} \cup \MC{D} \enskip } 
\quad \text{and} \quad 
H := \BIGBP{\enskip h_{e,v} \hspace{2pt} \colon \hspace{2pt} e \in E[\MC{I}] \enskip \text{such that} \enskip v \in {e\hspace{1pt}}^{\OP{actv}}_h \enskip }.$$
Let $\OP{Eqs}(X, H)$ denote the set of constraints in which the variables in $X \cup H$ have involved and which also hold with equality.
%
%
%
%
We classify the equalities of $\OP{Eqs}(X,H)$ by considering the four categories of constraints in $\mathbf{Q}(\Psi)$ as follows:
\begin{itemize}
	\item
		For the constraint~(\ref{LP_cdh_e}), we use
		\[
		\MC{C}^{(E)} := \BIGBP{ \vphantom{\dfrac{}{}}\hspace{3pt} e \hspace{2pt} \colon \hspace{2pt} \fbox{$\vphantom{{\dfrac{}{}}^{\bigcup}}$ $\sum_{v\in e}h_{e,v} = 1$ } \in \OP{Eqs}(X,H) \hspace{3pt} }
		\]
		to denote the the set of edges whose constraints are in $\OP{Eqs}(X,H)$.
		Note that, it follows that $\MC{C}^{(E)} = E[\MC{I}]$, since for any $e \in E[\MC{I}]$, at least one variable of the constraint $\sum_{v\in e}h_{e,v} = 1$ is included in $H$ by definition.
		%

	\item
		For the constraint~(\ref{LP_cdh_v}), we use 
		\[
			\MC{C}^{(V)} := \BIGBP{\hspace{3pt} v \hspace{2pt} \colon \hspace{2pt} \fbox{$\vphantom{{\dfrac{}{}}^{\bigcup}}$ $\sum_{e \in E[v]}d_e \cdot h_{e,v} \hspace{2pt} \le \hspace{2pt} c_v \cdot x_v$ } \in \OP{Eqs}(X,H) \hspace{3pt}}
		\]
		to denote the set of vertices whose capacity constraints are in $\OP{Eqs}(X,H)$.
		%

	\item
		Since $\MC{I}$ and $\MC{D}$ are non-extremal, it follows that $\ell_v < x_v < m_v$ for all $x_v \in X$.
		Therefore the constraint~(\ref{LP_cdh_mv}): $\ell_v \le x_v \le m_v$ does not appear in $\OP{Eqs}(X,H)$ for all $v \in \MC{I} \cup \MC{D}$.

	\item
		For the constraint~(\ref{LP_cdh_ev}), since $h_{e,v} > 0$ for all $h_{e,v} \in H$, the constraint $0\le h_{e,v}$ does not appear in $\OP{Eqs}(X,H)$ for all $e \in E$ and $v \in e$.
		On the other hand, we use 
		\[
			\MC{C}^{(E \times V)} := \BIGBP{ \vphantom{\dfrac{}{}} \enskip (e,v) \hspace{2pt} \colon \hspace{2pt} x_v \in X, \enskip h_{e,v} \in H, \enskip \fbox{$\vphantom{{\frac{}{}}^{\cup}}$ $h_{e,v} \hspace{1pt} \le \hspace{2pt} x_v$ } \in \OP{Eqs}(X,H) \enskip}
		\]
		to denote the set of supporting constraints in $\OP{Eqs}(X,H)$.
		Since $\MC{I}$ is not supporting and $\MC{D}$ is not supported, we know that $\vphantom{{\dfrac{}{}}^{\frac{T}{T}}} (e,v) \in \MC{C}^{(E \times V)}$ implies that $v \in \MC{I}$.

\end{itemize}
Note that, from the classification above, 
it follows that $\OP{Eqs}(X,H) = \MC{C}^{(E)} \cup \MC{C}^{(V)} \cup \MC{C}^{(E \times V)}$.
%
Let $$H^* := H \setminus \BIGBP{\vphantom{\dfrac{}{}} \hspace{2pt} h_{e,v} \hspace{2pt} \colon \hspace{2pt} (e,v) \in \MC{C}^{(E \times V)} \hspace{2pt}}$$ denote the set of variables in $H$ that does not correspond to a supporting constraint.
%
For the simplicity of notations, we will also use $H^*$ to denote the set of pairs $(e,v)$ such that $h_{e,v}$ is contained in $H^*$, when there is no confusion in the context.
%



%

\paragraph{The coefficient matrix $M_{(X,H)}$.}

Let $M_{(X,H)}$ denote the submatrix of $M_{(\OP{Eqs})}$ that is formed by $X \cup H$ and $\OP{Eqs}(X,H)$.
\vspace{-12pt}
\[
	\setlength{\arraycolsep}{0pt}
	\setlength{\dashlinedash}{2pt}
	\setlength{\dashlinegap}{1pt}
	\newcolumntype{C}[1]{>{\centering\arraybackslash}m{#1}}
	\newcolumntype{\Empty}{@{}m{0pt}@{}}
	\begin{array}{C{5em}c@{}c}
		&	\begin{array}{ C{3.6em} @{} C{5.8em} \Empty}
				& $\overbrace{\hspace{5.4em}}^{\text{\normalsize $X, H$}}$ & 
			\end{array} \\[-6pt]
	$M_{(\OP{Eqs})}$ = &
		\left(
		\begin{array}{ C{3.6em} : C{5.8em} \Empty}
			$\ddots$ & $M_{(X,H)}$ & \\[1.6em] \cdashline{1-2}
			$\cdots$ & $\mathbf{0}$ & \\[1em]
		\end{array}
		\right) &
		\hspace{-4pt}
		\begin{array}{ C{1em}@{} c \Empty}
		$\left.\begin{minipage}[c][3em]{1em}\end{minipage}\right\}$ & \rotatebox[origin=c]{0}{ $\OP{Eqs}(X, H)$} & \\
		\\[1em]
		\end{array}
	\end{array}
\]
%

%
%
Provided the above classification for constraints in $\OP{Eqs}(X,H)$ and the definition of $H^*$, the matrix $M_{(X,H)}$ can be written as follows, where the submatrices $M^{(1)}_{\MC{C}^{(E \times V)}}$ and $M^{(2)}_{\MC{C}^{(E \times V)}}$ are described in Figure~\ref{fig-matrix-C-EV}.

\smallskip

\[
	\newcommand*{\tn}[3]{\tikz[remember picture]\node[#1] (#2) {#3};}
	\setlength{\dashlinegap}{1pt}
	\setlength{\dashlinedash}{2pt}
	\newcolumntype{C}[1]{>{\centering\arraybackslash}m{#1}}
	\newcolumntype{\Empty}{@{}m{0pt}@{}}
	\begin{array}{cc@{}c}
		&	\begin{array}{ C{5em} C{3.8em} C{4em} C{3.4em} \Empty}
				$h_{e,v}$, & $h_{e,v}$, & $x_v$, & $x_v$, \\
				{\footnotesize \hspace{-6pt}$(e,v) \in \parbox{2.6em}{$\MC{C}^{(E \times V)}$}$} & 
				{\footnotesize $(e,v) \in \parbox{1em}{$H^*$}$} & 
				{\footnotesize $v \in \MC{I}$} &
				{\footnotesize $v \in \MC{D}$}
			\end{array} \\
		M_{(X,H)} = 
		&
		\left(
		\begin{array}{ C{5em} : C{3.8em} : C{4em} : C{3.4em} \Empty}
			\multirow{2}*[-6pt]{ \tn{anchor=east,inner sep=5pt}{f1}{$\vdots$} } & 
			\multirow{3}*[-12pt]{ $\vdots$ } & 
			\tn{anchor=east,inner sep=4pt}{f2}{$\mathbf{0}$} & $\mathbf{0}$ &
			\\[1.4em] 
			\cdashline{3-4}
			& & \tn{anchor=east,inner sep=4pt}{f3}{$\vdots$} & $\mathbf{0}$ &
			\\[1em] 
			\cdashline{1-1}
			\cdashline{3-4}
			$\mathbf{0}$ & & $\mathbf{0}$ & \hspace{2pt}$M^{\MC{D}}_{\MC{C}^{(V)}}$ & \\[1.4em]
			\cdashline{1-4}
			\tn{anchor=east,inner sep=2pt}{s1}{$M^{(1)}_{\MC{C}^{(E \times V)}}$} & 
			$\mathbf{0}$ & 
			\hspace{4pt}\tn{anchor=east,inner sep=0pt}{s2}{$M^{(2)}_{\MC{C}^{(E \times V)}}$} & $\mathbf{0}$ &
			\\[1.6em]
		\end{array}
		\hspace{2pt}\right)
		&
		\begin{array}{ m{12em} @{} \Empty }
			$\MC{C}^{(E)} \colon \sum h_{e,v} = 1$ & \\[1.2em]
			\cdashline{1-1}
			\multirow{2}*[-8pt]{ \hspace{-8pt} $\MC{C}^{(V)} \colon \sum d_e\cdot h_{e,v} \le c_v\cdot x_v$} & \\[1.3em]
			& \\[1.4em]
			\cdashline{1-1}
			$\MC{C}^{(E \times V)} \colon \enskip h_{e,v} \le x_v$ & \\[1.6em] 
		\end{array}
	\end{array}
	\tikzstyle{cvd_edge} = [black!80,dashed,->,shorten <= 2pt, shorten >= 2pt]
	\tikz[remember picture,overlay]
	\path[->] (s1.east)[yshift=0.2cm,xshift=-0.4cm] edge[cvd_edge,out=45,in=345] (f1.south east);
	\tikz[remember picture,overlay]
	\path[->] (s2.east)[xshift=-0.4cm,yshift=0.2cm] edge[cvd_edge,out=60,in=325] (f2.east);
	\tikz[remember picture,overlay]
	\path[->] (s2.east)[xshift=-0.4cm,yshift=0.2cm] edge[cvd_edge,out=60,in=340] (f3.east);
\]
\smallskip

\begin{figure*}[tp]
\[
	\setlength{\dashlinegap}{1pt}
	\setlength{\dashlinedash}{1pt}
	\newcolumntype{C}[1]{>{\centering\arraybackslash}m{#1}}
	\newcolumntype{\Empty}{@{}m{0pt}@{}}
	\begin{array}{c@{}c@{}c}
		&	{
			\small
			\setlength{\arraycolsep}{0pt}
			\begin{array}{ *{16}{C{1.9em}} \Empty}
				& & & & & &
				$x_{v_1}$ & $x_{v_2}$ & {\footnotesize $\cdots$} & $x_{v_k}$ & {\footnotesize $\cdots$} & $x_{v_{|\MC{I}|}}$
			\end{array} 
			} \hspace{8pt} \\[4pt]
		\left[\begin{array}{c|c@{}}
			\\[-6pt]
			M^{(1)}_{\MC{C}^{(E \times V)}} & M^{(2)}_{\MC{C}^{(E \times V)}} \\[8pt]
		\end{array}\right] \enskip =
		&
		\enskip {
		\setlength{\arraycolsep}{0pt}
		\left(
		\begin{array}{ *{9}{C{1em}} | *{4}{C{1em}} *{3}{C{1em}} \Empty}
			\cdashline{1-3}
			\multicolumn{3}{:c:}{ \multirow{3}*{ $I_{v_1}$ } } & & & 
				\multicolumn{4}{c|}{ \multirow{5}*{ $\mathbf{0}$ \enskip } } & 
					\multicolumn{1}{c:}{
						\multirow{3}*{ \begin{minipage}{1em}
						\footnotesize\centering
						$\begin{array}{@{}c@{}}
							-1 \\[-2pt] -1 \\[-6pt] \cdot \\[-8pt] \cdot \\[-8pt] \cdot \\[-4pt] -1
						\end{array}$
						\end{minipage} } } & & 
					\multicolumn{2}{c:}{ \multirow{5}*{ $\mathbf{0}$ \enskip } }	&
					\multicolumn{3}{c}{ \multirow{9}*{ $\mathbf{0}$ } }  \\
			\multicolumn{3}{:c:}{} & & &
				\multicolumn{4}{c|}{} & 
					\multicolumn{1}{c:}{} & & & \multicolumn{1}{c:}{} \\
			\multicolumn{3}{:c:}{} & & &
				\multicolumn{4}{c|}{} & 
					\multicolumn{1}{c:}{} & & & \multicolumn{1}{c:}{} \\\cdashline{1-5}\cdashline{10-11}
			& & & \multicolumn{2}{:c:}{ \multirow{2}*{ $I_{v_2}$ } } & 
				\multicolumn{4}{c|}{} & & 
					\multicolumn{1}{:c:}{
						\multirow{2}*{ \begin{minipage}{1em}
						\footnotesize\centering
						$\begin{array}{@{}c@{}}
							-1 \\[-6pt] \cdot \\[-8pt] \cdot \\[-8pt] \cdot \\[-4pt] -1
						\end{array}$
						\end{minipage} } } & & \multicolumn{1}{c:}{} & & & \\
			& & & \multicolumn{2}{:c:}{} & 
				\multicolumn{4}{c|}{} & &
					\multicolumn{1}{:c:}{} & & \multicolumn{1}{c:}{} \\\cdashline{4-5}\cdashline{11-11}
			\multicolumn{5}{c}{ \multirow{4}*{ $\mathbf{0}$ } } &
				\multicolumn{2}{c}{ \multirow{2}*{ $\ddots$ } } & & &
					\multicolumn{2}{c}{ \multirow{4}*{ \quad $\mathbf{0}$ } } & 
					\multicolumn{1}{c}{ 
						\multirow{2}*{ $\ddots$ } } & \multicolumn{1}{c:}{} \\
			\multicolumn{5}{c}{} & 
				\multicolumn{2}{c}{} & & & & & & \multicolumn{1}{c:}{} \\\cdashline{8-9}\cdashline{13-13}
			\multicolumn{5}{c}{} & & & 
				\multicolumn{2}{:c|}{ \multirow{2}*{ $I_{v_k}$ } } & & & & 
					\multicolumn{1}{:c:}{
						\multirow{2}*{ \begin{minipage}{1em}
						\footnotesize\centering
						$\begin{array}{@{}c@{}}
							-1 \\[-6pt] \cdot \\[-8pt] \cdot \\[-8pt] \cdot \\[-4pt] -1
						\end{array}$
						\end{minipage} } } \\
			\multicolumn{5}{c}{} & & & 
				\multicolumn{2}{:c|}{} & & & &
					\multicolumn{1}{:c:}{} \\\cdashline{8-9}\cdashline{13-13}
		\end{array}
		\hspace{2pt}\right)
		}
		&
		\begin{array}{ @{} C{6em} @{} \Empty }
			\multirow{3}*{ $(e,v_1) \in \MC{C}^{(E \times V)}$ } & \\
			\\
			\\\cdashline{1-1}
			\multirow{2}*{ $(e,v_2) \in \MC{C}^{(E \times V)}$ } & \\
			\\\cdashline{1-1}
			\multirow{2}*{ $\vdots$ } & \\
			\\\cdashline{1-1}
			\multirow{2}*{ $(e,v_k) \in \MC{C}^{(E \times V)}$ } & \\
			\\
		\end{array}
	\end{array}
\]
\vspace{-8pt}
\caption{The matrices $M^{(1)}_{\MC{C}^{(E \times V)}}$ and $M^{(2)}_{\MC{C}^{(E \times V)}}$, where $I_{v_i}$ denotes the identity matrix with dimension $\BIGC{\BIGBP{e\colon (e,v_i) \in \MC{C}^{(E \times V)}}}$.}
\label{fig-matrix-C-EV}
\end{figure*}

\smallskip

For the submatrix $\vphantom{\cfrac{}{}} M^{\MC{D}}_{\MC{C}^{(V)}}$, consider the column to which the variable $x_v$ corresponds for any $v \in \MC{D}$.
Since $M_{(\OP{Eqs})}$ has full column rank and since $x_v$ does not appear in the constraints of $\MC{C}^{(E)}$ and $\MC{C}^{(E \times V)}$,
it follows that $v$ must be in $\MC{C}^{(V)}$.\footnote{Otherwise, the column $x_v$ would be a zero vector, a contradiction.}
Therefore the matrix $M^{\MC{D}}_{\MC{C}^{(V)}}$ can be written as a diagonal matrix with coefficients $\BIGBP{\vphantom{\frac{}{}}\hspace{-3pt} -c_v}_{v \in \MC{D}}$ on its diagonals. 
%



\paragraph{Reduced constraints $\tilde{\MC{C}}^{(E)}$ and $\tilde{\MC{C}}^{(V)}$, and the submatrix $\tilde{M}$.}
%
%
Use the identity elements in the diagonal of $\vphantom{{\dfrac{}{}}_{T}} M^{(1)}_{\MC{C}^{(E \times V)}}$ as pivots and perform row reduction (Gaussian elimination on the rows) on $M_{(X,H)}$.
Let the resulting matrix be $\vphantom{{\dfrac{}{}}_{T}} M'_{(X,H)}$.
Note that, in this process, we literally replace the variable $h_{e,v}$ with $x_v$ in the constraints $\MC{C}^{(E)}$ and $\MC{C}^{(V)}$ for all $\vphantom{{\dfrac{}{}}^{\frac{T}{T}}} (e,v) \in \MC{C}^{(E \times V)}$.
In particular, for each $e \in \MC{C}^{(E)}$, the constraint $\sum_{v \in e}h_{e,v} = 1$ is replaced by the constraint\footnote{Note that, in Equation~(\ref{cons-C-E-updated}) we also drop out $h_{e,v}$ for all $e \in \MC{C}^{(E)}$ and $v \in e \setminus e^{\OP{actv}}_h$ since by definition they are zero at the considered extreme point $p$.}
\begin{equation}
\sum_{\substack{v \text{ such that} \\ (e,v) \in H^*}}h_{e,v} \enskip + \sum_{\substack{v \text{ such that}, \\ (e,v) \in \MC{C}^{(E \times V)}}}x_v \hspace{2pt} = \hspace{2pt} 1.
\label{cons-C-E-updated}
\end{equation}
Similarly, for each $\vphantom{{\dfrac{}{}}^{\frac{T}{T}}} v \in \MC{C}^{(V)}$, the original constraint is replaced by
\begin{equation}
\vphantom{{\dfrac{}{}}^{{\bigcup}^{\bigcup}}}
\sum_{\substack{e \in E[v] \text{ such that} \\ (e,v) \notin \MC{C}^{(E \times V)}}}d_e \cdot h_{e,v} \hspace{2pt} \le \hspace{2pt} c'_v \cdot x_v,
\quad \text{where } \hspace{2pt}
c'_v = \hspace{2pt} c_v \hspace{2pt} - \hspace{-2pt} \sum_{\substack{ e \text{ such that}, \\ (e,v) \in \MC{C}^{(E \times V)}}}d_e.
\phantom{text{ttt}}
\label{cons-C-V-updated}
\end{equation}
We use $\tilde{\MC{C}}^{(E)}$ and $\tilde{\MC{C}}^{(V)}$ to denote the updated version of the constraints, i.e.,~(\ref{cons-C-E-updated}) and~(\ref{cons-C-V-updated}), in $\MC{C}^{(E)}$ and $\MC{C}^{(V)}$, respectively.
%
%

\medskip

Let $\tilde{M}$ denote the submatrix of $\vphantom{{\dfrac{}{}}_{T}} M'_{(X,H)}$ formed by the rows belonging to $\tilde{\MC{C}}^{(E)} \cup \tilde{\MC{C}}^{(V)}$ and the columns belonging to $X \cup H^*$.
Lemma~\ref{claim-full-column-rank} follows from the extremality of $p$ and the fact that the constraints in $\vphantom{{\dfrac{}{}}^{\frac{T}{T}}} \MC{C}^{(E \times V)}$ are linearly independent.

\begin{lemma} \label{claim-full-column-rank}
$\tilde{M}$ has full column rank.
\end{lemma}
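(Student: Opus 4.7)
\bigskip

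\noindent\textbf{Proof proposal for Lemma~\ref{claim-full-column-rank}.}
The plan is to trace the full-column-rank property of $M_{(\OP{Eqs})}$, guaranteed by Proposition~\ref{claim-MID-full-column-rank}, down to the submatrix $\tilde{M}$ via two observations: (i)~a ``block-isolation'' fact that lets us restrict the constraint matrix to $\OP{Eqs}(X,H)$ without losing rank, and (ii)~the rank-preserving nature of elementary row operations, which reduces the verification to reading off the block structure of $M'_{(X,H)}$.

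First I would argue that $M_{(X,H)}$ itself already has full column rank equal to $|X \cup H|$. The key observation is that by construction every constraint of $\mathbf{Q}(\Psi)$ that involves a variable from $X \cup H$ is captured in $\OP{Eqs}(X,H)$, so every row of $M_{(\OP{Eqs})}$ \emph{outside} of $\OP{Eqs}(X,H)$ has a zero entry in every column indexed by $X \cup H$. Hence the columns of $M_{(\OP{Eqs})}$ indexed by $X \cup H$ form exactly $M_{(X,H)}$ padded with zero rows, and by Proposition~\ref{claim-MID-full-column-rank} these columns are linearly independent, giving the claim. Elementary row operations preserve rank, so $M'_{(X,H)}$ also has full column rank $|X \cup H| = |X| + |H^{*}| + |\MC{C}^{(E \times V)}|$.

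Next I would inspect $M'_{(X,H)}$ under the ordering (rows: $\tilde{\MC{C}}^{(E)} \cup \tilde{\MC{C}}^{(V)}$ on top, $\MC{C}^{(E \times V)}$ on the bottom; columns: $X \cup H^{*}$ on the left, $\{h_{e,v} : (e,v) \in \MC{C}^{(E \times V)}\}$ on the right). By the definition of $H^{*}$ the variables $h_{e,v}$ with $(e,v) \in H^{*}$ do not appear in any $\MC{C}^{(E \times V)}$ constraint, and the row reduction did not touch the $\MC{C}^{(E \times V)}$ rows themselves, so those rows remain as in Figure~\ref{fig-matrix-C-EV}. Meanwhile, the pivoting zeroed out the $h_{e,v}$-entries for $(e,v) \in \MC{C}^{(E \times V)}$ throughout $\tilde{\MC{C}}^{(E)} \cup \tilde{\MC{C}}^{(V)}$. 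Consequently
$$
M'_{(X,H)} \enskip = \enskip \begin{pmatrix} \tilde{M} & \mathbf{0} \\ A & I \end{pmatrix},
$$
where the identity block corresponds to the $+1$ pivots of $M^{(1)}_{\MC{C}^{(E \times V)}}$. A block matrix of this shape has rank equal to $\OP{rank}(\tilde{M}) + |\MC{C}^{(E \times V)}|$, which, combined with the equality above for the total rank, forces $\OP{rank}(\tilde{M}) = |X| + |H^{*}|$, i.e.\ full column rank.

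The only subtlety I anticipate is justifying the shape of the lower-right identity block and the zero above it cleanly: one must check that after row reduction no stray non-zero entry remains in an $H^{*}$-column within a $\MC{C}^{(E \times V)}$-row (immediate from $(e,v) \in H^{*}$ meaning $h_{e,v}$ is not constrained by (\ref{LP_cdh_ev}) with equality) and that the $h_{e,v}$-columns for $(e,v) \in \MC{C}^{(E \times V)}$ have truly been cleared above (which is what the Gaussian elimination with these pivots accomplishes by construction). Once this bookkeeping is spelled out, the conclusion is an immediate consequence of the block-triangular rank identity.
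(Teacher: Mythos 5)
Your proposal is correct and follows essentially the same route as the paper's proof: restrict the full-column-rank property of $M_{(\OP{Eqs})}$ to $M_{(X,H)}$ via the observation that the columns of $X \cup H$ vanish outside $\OP{Eqs}(X,H)$, then use the identity block $M^{(1)}_{\MC{C}^{(E\times V)}}$ and the block-triangular rank identity to peel off exactly $\bigl|\MC{C}^{(E\times V)}\bigr|$ from the rank, leaving $\tilde{M}$ with full column rank. Your write-up merely spells out the bookkeeping (the zero-padding of the columns and the exact block shape of $M'_{(X,H)}$) that the paper treats more tersely.
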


\begin{proof}
By Proposition~\ref{claim-MID-full-column-rank}, the column vectors of the matrix $M_{(Eqs)}$ are linearly independent.
Therefore, the column vectors in the submatrix $M_{(X,H)}$ are also linearly independent as they do not involve in constraints other than those in $\vphantom{{\frac{}{}}^{\frac{T}{T}}} \OP{Eqs}(X,H)$.

\smallskip

Since $M_{(X,H)}$ contains an identity matrix, i.e., $M^{(1)}_{\MC{C}^{(E \times V)}}$, by Gaussian elimination we know that, taking out the rows and the columns that belong to $\vphantom{{\dfrac{}{}}^{\frac{T}{T}}_{T}} M^{(1)}_{\MC{C}^{(E \times V)}}$ from $M_{(X,H)}$ would have decreased its rank by exactly the dimension of $M^{(1)}_{\MC{C}^{(E \times V)}}$, i.e., $\BIGC{\vphantom{\frac{}{}} \hspace{2pt} \MC{C}^{(E \times V)} \hspace{2pt}}$.

\[
	\newcommand*{\tn}[3]{\tikz[remember picture]\node[#1] (#2) {#3};}
	\setlength{\arraycolsep}{0pt}
	\setlength{\dashlinedash}{2pt}
	\setlength{\dashlinegap}{1pt}
	\newcolumntype{C}[1]{>{\centering\arraybackslash}m{#1}}
	\newcolumntype{\Empty}{@{}m{0pt}@{}}
	\begin{array}{C{5em}cC{4em}c}
	$M'_{(X,H)} \colon$ &
		\left(
		\begin{array}{ C{4.2em} : C{4em} \Empty}
			$\mathbf{0}$ & $\tilde{M}$ & \\[1em] 
			\cdashline{1-2}
			$M^{(1)}_{\MC{C}^{(E\times V)}}$\parbox{1pt}{\tn{}{M1}{}} & 
			\tn{anchor=north,inner sep=1pt}{M1R}{}\hspace{2pt}$\cdots$\hspace{8pt} & \\[1.2em]
		\end{array}
		\right) &
		$\Longrightarrow$ &
		\left(
		\begin{array}{ C{4.2em} : C{4em} \Empty}
			$\mathbf{0}$ & $\tilde{M}$ & \\[1em] \cdashline{1-2}
			$M^{(1)}_{\MC{C}^{(E\times V)}}$ & $\mathbf{0}$ & \\[1.2em]
		\end{array}
		\right)
	\end{array}
	\tikzstyle{cvd_edge} = [thick,black!80,dashed,->]
	\tikz[remember picture,overlay]
	\path[->] (M1.east)[xshift=-0.4cm,yshift=-0.4cm] edge[cvd_edge,out=320,in=240] (M1R.south west);
\]

\smallskip

%
%
%
Therefore $\tilde{M}$ has a full column rank as well.
\end{proof}

%

%

%
\subsection{Mapping obtained from extremality of $p$}
\label{sec-separation-mapping-from-extremality}
In the following we consider the matrix $\tilde{M}$.
Since $\tilde{M}$ has full column rank, for each column $j$ of $\tilde{M}$ there exists a distinct row $i$ such that $\tilde{M}(i,j)$ is non-zero\footnote{If not, the column reduction (Gaussian elimination on the columns) would have led to a rank less than $\BIGC{X \cup H^*}$, a contradiction to the fact that $\tilde{M}$ has a full column rank.}, i.e.,  the existence of distinct pivot for each column.
Let $$\sigma \colon X \cup H^* \mapsto \tilde{\MC{C}}^{(E)} \cup \tilde{\MC{C}}^{(V)}$$ denote one of such mappings.
Note that $\sigma$ is injective, i.e., for all $r, s \in X \cup H^*$, $r \neq s$ implies that $\sigma(r) \neq \sigma(s)$.
%
%
%
%
For each $x_v \in X$, define a mapping $\pi \colon X \mapsto X \cup H^*$ as follows. 
$$
\pi(x_v) := \begin{cases}
h_{e,v}, \enskip \text{for \emph{any} $e$ such that } (e,v) \in H^*, \text{\enskip} & \text{if $v \in \MC{I}$ and $\sigma(x_v) = v$,} \\
x_v, & \text{otherwise.}
\end{cases}
$$
The following lemma shows that $\pi$ is a well-defined function of mapping.

\begin{lemma}
For any $v \in \MC{I}$ with $\sigma(x_{v}) = v$, there exists $e$ such that $(e,v) \in H^*$.
\end{lemma}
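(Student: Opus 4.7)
The plan is to argue by contradiction: suppose no $e$ with $(e,v) \in H^*$ exists. Then, by the classification preceding the definition of $H^*$, for every $e \in E[v]$ we have either $h_{e,v} = 0$, or $(e,v) \in \MC{C}^{(E \times V)}$ (which means $h_{e,v} = x_v$).

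First I would use that $\sigma(x_v) = v$ selects a row in $\tilde{\MC{C}}^{(V)}$, so $v \in \MC{C}^{(V)}$ and the capacity constraint at $v$ is tight at $p$:
\[
\sum_{e \in E[v]} d_e \cdot h_{e,v} \;=\; c_v \cdot x_v.
\]
Under the standing assumption, each summand on the left is either $0$ or equal to $d_e \cdot x_v$, so the equation collapses to
\[
\Bigl( \sum_{e \colon (e,v) \in \MC{C}^{(E \times V)}} d_e \Bigr) \cdot x_v \;=\; c_v \cdot x_v.
\]
Since $v \in \MC{I}$ is non-extremal we have $x_v > \ell_v \ge 0$, so dividing by $x_v$ gives $c'_v = c_v - \sum_{(e,v) \in \MC{C}^{(E \times V)}} d_e = 0$.

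Next I would appeal to the row-reduction used to produce~(\ref{cons-C-V-updated}): the coefficient of $x_v$ in the row $\tilde{\MC{C}}^{(V)}_v$ of $\tilde{M}$ is precisely $-c'_v$. Combined with the previous step this yields $\tilde{M}(\tilde{\MC{C}}^{(V)}_v, x_v) = 0$, contradicting the defining property of $\sigma$ that $\sigma(x_v) = \tilde{\MC{C}}^{(V)}_v$ must be a row in which the column $x_v$ of $\tilde{M}$ has a nonzero entry. This contradiction shows that some $e$ with $(e,v) \in H^*$ must exist.

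The argument is essentially bookkeeping; the only point that needs care is tracking how the substitutions $h_{e,v} \mapsto x_v$ for $(e,v) \in \MC{C}^{(E \times V)}$ shift the coefficient of $x_v$ in the capacity row from $-c_v$ to $-c'_v$, and invoking non-extremality of $\MC{I}$ to rule out the degenerate case $x_v = 0$. Neither step presents a genuine obstacle.
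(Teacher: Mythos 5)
Your proposal is correct and follows essentially the same route as the paper's proof: both arguments assume no $(e,v)\in H^*$ exists, use the tightness of the capacity constraint at $v$ (implied by $\sigma(x_v)=v$) together with $h_{e,v}=x_v$ for all $(e,v)\in\MC{C}^{(E\times V)}$ to derive $c'_v\cdot x_v=0$, invoke non-extremality of $v\in\MC{I}$ to get $x_v>0$ and hence $c'_v=0$, and then contradict the pivot property of $\sigma$ since the entry of $\tilde{M}$ in row $v$ and column $x_v$ vanishes. The only cosmetic difference is that you perform the substitution on the original constraint~(\ref{LP_cdh_v}) by hand, whereas the paper reads the same conclusion off the already-reduced constraint~(\ref{cons-C-V-updated}).
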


\begin{proof}
%
%
Let $v \in \MC{I}$ be a vertex with $\sigma(x_v) = v$.
Since the image of $\sigma$ is $\tilde{\MC{C}}^{(E)} \cup \tilde{\MC{C}}^{(V)}$, from the assumption that $\sigma(x_v) = v$, we know that $v \in \tilde{\MC{C}}^{(V)}$.

\smallskip

Since $v \in \MC{I}$, we know that the variables $h_{e,v}$ for all $e \in E[v]$ with $h_{e,v} > 0$ have been included in $H$ by its definition.
Therefore the Constraint~(\ref{cons-C-V-updated}) for $v$ can be written as 
\begin{equation}
\sum_{\substack{e \text{ such that} \\ (e,v) \in H \setminus \MC{C}^{(E \times V)}}}d_e \cdot h_{e,v} \hspace{2pt} = \hspace{2pt} c'_v \cdot x_v.
\label{eq-pi-well-defined-degenerate-case}
\end{equation}
If $(e,v) \in \MC{C}^{(E \times V)}$ for all $e$ such that $(e,v) \in H$, then the left hand side of (\ref{eq-pi-well-defined-degenerate-case}) vanishes and we have $c'_v\cdot x_v = 0$.
Since $v \in \MC{I}$, $v$ is non-extremal and it follows that $x_v > \ell_v \ge 0$.
Therefore $c'_v$ must be zero.
This means that, the row to which $v$ corresponds in $\tilde{M}$ is a zero vector, which renders it impossible to be mapped to by $\sigma$ since it has no pivots for the column $x_v$, a contradiction.
Hence, there must exist $e$ such that $(e,v) \in H$ and $(e,v) \notin \MC{C}^{(E \times V)}$, which in turn implies that $(e,v) \in H^*$.
%
\end{proof}

Consider the mapping $(\sigma \circ \pi) \colon X \mapsto \tilde{\MC{C}}^{(E)} \cup \tilde{\MC{C}}^{(V)}$. 
Since both $\sigma$ and $\pi$ are injective, 
$(\sigma \circ \pi)$ is also injective.
Lemma~\ref{claim-mapping-to-edges} provides an exact classification on the image of $(\sigma \circ \pi)$.

\begin{lemma} \label{claim-mapping-to-edges}
$(\sigma \circ \pi) (x_v) \in \tilde{\MC{C}}^{(E)}$ for all $v \in \MC{I}$. 
In contrast, $(\sigma \circ \pi)(x_v) = v$ for all $v \in \MC{D}$.
\end{lemma}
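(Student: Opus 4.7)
}

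The plan is to analyze, column by column of $\tilde{M}$, precisely which rows can receive a pivot under $\sigma$, and then invoke injectivity of $\sigma$ to force $(\sigma \circ \pi)(x_v)$ into the correct block. Concretely, I would first inspect the support of each column of $\tilde{M}$. For a variable $h_{e,v} \in H^*$ (so $(e,v) \notin \MC{C}^{(E \times V)}$), the only nonzero entries of its column occur in (i)~the row of $\tilde{\MC{C}}^{(E)}$ corresponding to $e$, with coefficient $1$ from~(\ref{cons-C-E-updated}), and (ii)~the row of $\tilde{\MC{C}}^{(V)}$ corresponding to $v$, with coefficient $d_e$ from~(\ref{cons-C-V-updated}). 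For a variable $x_v$ with $v \in \MC{I}$, the nonzero entries lie in (i)~the rows of $\tilde{\MC{C}}^{(E)}$ for those edges $e$ with $(e,v) \in \MC{C}^{(E\times V)}$, and (ii)~the row of $\tilde{\MC{C}}^{(V)}$ for $v$ itself (if $v \in \MC{C}^{(V)}$). For $v \in \MC{D}$, however, $x_v$ does not appear in $\tilde{\MC{C}}^{(E)}$ at all (since edges in $\MC{C}^{(E)}=E[\MC{I}]$ only carry $x_u$'s with $u \in \MC{I}$, by the observation that $(e,u) \in \MC{C}^{(E\times V)}$ forces $u \in \MC{I}$), and it only appears in the row for $v$ in $\tilde{\MC{C}}^{(V)}$ with coefficient $-c_v$.

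For the second assertion (the $\MC{D}$ case), the column of $x_v$ in $\tilde{M}$ therefore has a single nonzero entry, sitting in the row labeled by $v$ in $\tilde{\MC{C}}^{(V)}$. Since $\sigma(x_v)$ must be a pivot row for this column, this forces $\sigma(x_v) = v$. Because $v \in \MC{D}$ is disjoint from $\MC{I}$, the definition of $\pi$ gives $\pi(x_v) = x_v$, so $(\sigma \circ \pi)(x_v) = \sigma(x_v) = v$, as required.

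For the first assertion (the $\MC{I}$ case), I would split on whether $\sigma(x_v) = v$ or not. If $\sigma(x_v) \neq v$, then $\pi(x_v) = x_v$, so it suffices to rule out $\sigma(x_v) \in \tilde{\MC{C}}^{(V)}$. But the only row of $\tilde{\MC{C}}^{(V)}$ in which $x_v$'s column has a nonzero entry is the one labeled by $v$ itself (no other $\tilde{\MC{C}}^{(V)}$-constraint involves $x_v$ after the reduction), so $\sigma(x_v) \neq v$ forces $\sigma(x_v) \in \tilde{\MC{C}}^{(E)}$. If instead $\sigma(x_v) = v$, then by construction $\pi(x_v) = h_{e,v}$ for some $(e,v) \in H^*$, and by the column-support analysis above, $\sigma(h_{e,v}) \in \{e, v\}$. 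Injectivity of $\sigma$ together with $\sigma(x_v) = v$ rules out $\sigma(h_{e,v}) = v$, leaving $\sigma(h_{e,v}) = e \in \tilde{\MC{C}}^{(E)}$.

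The only subtlety I anticipate is confirming the column-support claim for $h_{e,v}$ and $x_v$, in particular that $x_v$ with $v \in \MC{D}$ genuinely vanishes from every row of $\tilde{\MC{C}}^{(E)}$ and from every $\tilde{\MC{C}}^{(V)}$-row other than $v$'s own; this follows by unpacking~(\ref{cons-C-E-updated}) and~(\ref{cons-C-V-updated}) together with the structural fact that $\MC{C}^{(E\times V)}$ only involves vertices in $\MC{I}$ (since $\MC{I}$ is the only supported set). Once that bookkeeping is in place, the rest is a one-line injectivity argument, so I do not expect genuine difficulty beyond ensuring the support analysis is written cleanly.
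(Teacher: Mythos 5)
Your argument matches the paper's proof essentially step for step: you split the $\MC{I}$ case on whether $\sigma(x_v) = v$, use the two-constraint support of $h_{e,v}$ together with injectivity of $\sigma$ to force the pivot onto the edge row, and for $\MC{D}$ use that the column of $x_v$ is supported only on $v$'s own row in $\tilde{\MC{C}}^{(V)}$. The column-support bookkeeping you flag as the only subtlety is exactly what the paper establishes via~(\ref{cons-C-E-updated}),~(\ref{cons-C-V-updated}), and the observation that $(e,u) \in \MC{C}^{(E\times V)}$ forces $u \in \MC{I}$, so the plan is complete and correct.
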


\begin{proof}
First we show that $(\sigma \circ \pi) (x_v) \in \tilde{\MC{C}}^{(E)}$ if $v \in \MC{I}$.
Depending on whether or not $\sigma(x_v) = v$, we consider two cases.
If $\sigma(x_v) = v$, then it suffices to show that $\sigma(h_{e,v}) = e$ for all $e$ such that $(e,v) \in H^*$.
Indeed, for any $e$ such that $(e,v) \in H^*$, provided that $v \in \tilde{\MC{C}}^{(V)}$ (since it is mapped by $\sigma$), the variable $h_{e,v}$ appears in exactly two constraints in $\tilde{\MC{C}}^{(E)} \cup \tilde{\MC{C}}^{(V)}$, i.e., the constraint $e$ and the constraint $v$, respectively.
Since the constraint $v$ is already occupied by $x_v$ (in the mapping $\sigma$), it leaves $e$ the only choice for $h_{e,v}$ to occupy.
%
See also Figure~\ref{fig-mapping-on-edge}(a) for an illustration.

\smallskip

If $\sigma(x_v) \neq v$, then $(\sigma \circ \pi)(x_v) = \sigma(x_v)$ by definition. Since the variable $x_v$ involves in exactly one constraint in $\tilde{\MC{C}}^{(V)}$, i.e., the constraint $v$ itself, $\sigma(x_v) \neq v$ implies that $\sigma(x_v)$ must be in $\tilde{\MC{C}}^{(E)}$.
%
This proves the first part of the statement.

\smallskip

To see that $(\sigma \circ \pi)(x_v) = v$ for all $v \in \MC{D}$, it suffices to see that in the matrix $M_{(X,H)}$, and hence the matrix $\tilde{M}$, the column belonging to $x_v$ has one and only one non-zero element that is located in the row belonging to the constraint $v \in \tilde{\MC{C}}^{(V)}$, i.e., in the diagonal matrix $M^{\MC{D}}_{\MC{C}^{(V)}}$.
Therefore we have $(\sigma \circ \pi)(x_v) = \sigma(x_v) = v$.
\end{proof}

\begin{figure*}[tp]
\centering
%
\hspace{2.2em}
{\begin{minipage}{10em}
\centering
\newcolumntype{C}[1]{>{\centering\arraybackslash}m{#1}}
\newcommand*{\tn}[3]{\tikz[remember picture]\node[#1] (#2) {#3};}
\setlength{\dashlinegap}{1pt}
\setlength{\dashlinedash}{1pt}
\begin{tabular}{@{} C{8em} @{}}
	\\[1em]
	\quad
	\begin{tabular}{ @{} *{2}{C{2.6em}} @{} }
	\quad \tn{anchor=base,inner sep=0pt}{v}{$\circ$} $v$ &
	\tn{anchor=base,inner sep=0pt}{v2}{$\circ$}
	\end{tabular}
	\\[4pt]
	\qquad \quad \tn{anchor=base,inner sep=0pt}{c}{} 
	\\[10pt]
	\begin{tabular}{ @{} *{2}{C{1.5em}} *{2}{C{1.5em}} @{} }
		\tn{anchor=base,inner sep=0pt}{u1}{$\circ$} &
		\tn{anchor=base,inner sep=0pt}{u4}{$\circ$} & 
		\tn{anchor=base,inner sep=0pt}{u2}{$\circ$} & 
		\tn{anchor=base,inner sep=0pt}{u3}{$\circ$}
	\end{tabular}
	\\[0.5em]
\end{tabular}
\tikzset{ptr/.style={
	decoration={markings,
		mark=at position 0.7 with 
			{\arrow[scale=1.8,rotate=0,>=latex']{#1}},
	},
	postaction={decorate}}
}
\newcommand*{\tkc}[1]{\tikz[remember picture,overlay] #1}
\newcommand*{\tkdarrow}[3]{\tikz[remember picture,overlay]
\draw[decoration={markings,mark=at position #3 with {\arrow[scale=1.4,>=triangle 45]{>}}},postaction={decorate},dashed] (#1) to (#2);
}
\newcommand*{\inptr}[4]{\tkoverlayarrow{0.6}{<}{scale=1.6,>=triangle 45}{#1}{#2}{#3}{#4}}
\newcommand*{\dptr}[1]{\tkoverlayarrow{0.85}{>}{scale=2.4,>=stealth}{hebase.south}{#1.north}{out=270,in=90}{dashed}}
%
\tkdarrow{v}{u1}{0.5}
\tkdarrow{v}{u4}{0.65}
\tkc{ \draw[-,dashed] (u4) to (v); }
\tkc{ \draw[-,dashed] (u2) to [out=80,in=280] (c); }
\tkc{ \draw[-,dashed] (u3) to [out=130,in=290] (c); }
\tkc{ \draw[-,thick] (c) to [out=120,in=290] (v); }
\tkc{ \draw[-,dashed] (c) to [out=90,in=250] (v2); }
\tkc{ \path[->,>=triangle 45] (v) edge[dotted,thick,loop left,min distance=14mm,in=170,out=50,looseness=14] node[anchor=south east,xshift=1.2cm] {$\sigma(x_v) = v$} (v); }
%
\tkc{ \node () [anchor=base,xshift=0.2cm,yshift=-1.1cm] at (u4) {(a)}; }
\end{minipage}}
%
%
\qquad \qquad
%
%
{\begin{minipage}{14em}
\centering
\newcolumntype{C}[1]{>{\centering\arraybackslash}m{#1}}
\newcommand*{\tn}[3]{\tikz[remember picture]\node[#1] (#2) {#3};}
\setlength{\dashlinegap}{1pt}
\setlength{\dashlinedash}{1pt}
\begin{tabular}{@{} c @{}}
	\begin{tabular}{ @{}c@{} *{3}{C{14pt}} @{} }
		$v$ &
		\tn{anchor=base,inner sep=0pt}{he1}{$\circ$} & 
		\tn{anchor=base,inner sep=0pt}{he2}{$\circ$} & 
		\tn{anchor=base,inner sep=0pt}{he3}{$\circ$}
	\end{tabular}
	\enskip \tn{anchor=east,inner sep=8pt}{heu}{}
	\\[4pt] \hdashline
	\\[10pt]
	\phantom{\qquad\qquad\quad} \tn{anchor=base,inner sep=0pt}{hebase}{} {\small \phantom{e}$\Gamma(v) = (e_v)^{\OP{actv}}_{h}$}
	\\[24pt] \hdashline
	\\[-4pt]
	\begin{tabular}{ *{4}{C{16pt}} }
		\tn{anchor=base,inner sep=0pt}{hed1}{$\circ$} & 
		\tn{anchor=base,inner sep=0pt}{hed2}{$\circ$} & 
		\tn{anchor=base,inner sep=0pt}{hed3}{$\circ$} &
		\tn{anchor=base,inner sep=0pt}{hed4}{$\circ$}
	\end{tabular}
	\tn{anchor=east,inner sep=6pt}{hed}{}
\end{tabular}
\tikzset{ptr/.style={
	decoration={markings,
		mark=at position 0.7 with 
			{\arrow[scale=1.8,rotate=0,>=latex']{#1}},
	},
	postaction={decorate}}
}
\newcommand*{\tkoverlayarrow}[7]{
\tikz[remember picture,overlay]
\draw[decoration={markings,mark=at position #1 with {\arrow[#3]{#2}}},postaction={decorate},#7] (#4) to [#6] (#5);
}
\newcommand*{\inptr}[4]{\tkoverlayarrow{0.6}{<}{scale=1.6,>=triangle 45}{#1}{#2}{#3}{#4}}
\newcommand*{\uline}[2]{\tikz[remember picture,overlay] \draw[-,#2] (hebase.north) to [out=90,in=270] (#1.south);}
\newcommand*{\dptr}[1]{\tkoverlayarrow{0.85}{>}{scale=2.4,>=stealth}{hebase.south}{#1.north}{out=270,in=90}{dashed}}
\inptr{hebase.north}{he1.south}{out=90,in=270}{dashed}
\uline{he2}{dashed} \uline{he3}{dashed}
\dptr{hed1} \dptr{hed2} \dptr{hed3} \dptr{hed4}
\tikz[remember picture,overlay] \node at (heu.south east) {\small $\MC{I}$};
\tikz[remember picture,overlay] \node at (hed.east) {\small $V \setminus \MC{I}$};
\smallskip

(b)
\end{minipage}}
\caption{The restricted behavior of the mapping $\sigma$. The arrow head shows the target to which $\sigma$ is mapping. (a) If the variable $x_v$ is mapped to $v$ itself, then there exists non-empty outward mapping to its incident edges. (b) When an edge constraint is occupied (mapped) by the variable $\pi(x_v)$, the remaining variables in the edge constraint 
have to map outward.}
\label{fig-mapping-on-edge}
\end{figure*}

Lemma~\ref{claim-mapping-to-edges} shows that the mapping $(\sigma \circ \pi)$ gives a distinct edge for each $v \in \MC{I}$.
The following lemma provides properties for the ``active ends'' of these particular edges.
Recall that $e^{\OP{actv}}_h$ denotes the set of the active ends of the edge $e$.

\begin{lemma}[\emph{Restricted Behavior of the Mappings}] \label{lemma-properties-active-ends-mapped-edges}
Consider the edge $e_v := (\sigma \circ \pi)(x_v)$ defined for any $v \in \MC{I}$.
We have 
\begin{enumerate}
	\item
		$v \in \big( e_v \big)^{\OP{actv}}_h$ for all $v \in \MC{I}$.

	\item
		$\sigma(h_{e_v,u}) = u$ for any $u \in \big( e_v \big)^{\OP{actv}}_h \setminus \MC{I}$.
\end{enumerate}
\end{lemma}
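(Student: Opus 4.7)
The plan is to split on the two cases that define $\pi(x_v)$ and exploit the sparsity of the relevant columns of $\tilde{M}$ together with the injectivity of $\sigma$.

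For statement (1), I would first consider the case $\sigma(x_v) = v$. Then by definition $\pi(x_v) = h_{e_v, v}$ for some $e_v$ with $(e_v, v) \in H^*$, and since $H^* \subseteq H$, the definition of $H$ directly yields $v \in (e_v)^{\OP{actv}}_h$. In the other case, $\sigma(x_v) \neq v$, so $\pi(x_v) = x_v$ and by the previous lemma $e_v = \sigma(x_v) \in \tilde{\MC{C}}^{(E)}$. Inspecting the post-reduction form of the edge constraint~(\ref{cons-C-E-updated}), the only way $x_v$ can acquire a non-zero coefficient in an edge row of $\tilde{M}$ is through the substitution of $h_{e_v, v}$ out of a tight supporting constraint $(e_v, v) \in \MC{C}^{(E \times V)}$. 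The equality $h_{e_v, v} = x_v$, combined with non-extremality of $v$ (so $x_v > \ell_v \ge 0$), yields $h_{e_v, v} > 0$, hence $v \in (e_v)^{\OP{actv}}_h$.

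For statement (2), fix $u \in (e_v)^{\OP{actv}}_h \setminus \MC{I}$, noting that $u \neq v$ since $v \in \MC{I}$. First I would argue that $h_{e_v, u} \in H^*$: membership in $H$ is immediate from $e_v \in \tilde{\MC{C}}^{(E)} = \MC{C}^{(E)} = E[\MC{I}]$ and $u \in (e_v)^{\OP{actv}}_h$, and moreover the earlier observation that $(e, u) \in \MC{C}^{(E \times V)}$ forces $u \in \MC{I}$, together with $u \notin \MC{I}$, rules out $(e_v, u) \in \MC{C}^{(E \times V)}$. Next I would localize the support of the column $h_{e_v, u}$ in $\tilde{M}$ to at most two rows: row $e_v \in \tilde{\MC{C}}^{(E)}$ and row $u$ when $u \in \tilde{\MC{C}}^{(V)}$. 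Row $e_v$ is already occupied by the image of $\pi(x_v)$ under $\sigma$ --- namely $\sigma(h_{e_v, v}) = e_v$ in the first case and $\sigma(x_v) = e_v$ in the second --- and $h_{e_v, u} \neq \pi(x_v)$. The injectivity of $\sigma$ then excludes $\sigma(h_{e_v, u}) = e_v$, leaving $\sigma(h_{e_v, u}) = u$ as the only remaining possibility; this also forces $u \in \tilde{\MC{C}}^{(V)}$ in passing.

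The main obstacle will be Case B of statement (1): one must recognize that any non-zero $x_v$-entry appearing in an edge row of $\tilde{M}$ is inherited from the Gaussian elimination of a tight supporting constraint, and therefore that $(e_v, v) \in \MC{C}^{(E \times V)}$ is forced. Once this is pinned down, both statements reduce to a clean bookkeeping argument on the two-per-column sparsity pattern of the $h_{e,u}$ variables and the injectivity of $\sigma$.
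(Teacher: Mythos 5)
Your proposal is correct and follows essentially the same route as the paper's proof: for part (1) you use the post-elimination form of the edge constraint to conclude that $(e_v,v)$ lies in $H^*$ or in $\MC{C}^{(E\times V)}$ (hence in $H$, hence $h_{e_v,v}>0$), and for part (2) you use the two-row support of the column $h_{e_v,u}$ together with the injectivity of $\sigma$ and the fact that row $e_v$ is already occupied by $\pi(x_v)$. The only cosmetic difference is that in part (1) you deduce $h_{e_v,v}>0$ from tightness of the supporting constraint and non-extremality rather than directly from the definition of $H$; both are valid.
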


\begin{proof}
Consider a vertex $v \in \MC{I}$ and the edge $e_v := (\sigma \circ \pi)(x_v)$.
From the Constraint~(\ref{cons-C-E-updated}) for $e_v$ we know that $e_v$ is mapped either by the variable $h_{e_v,v}$ with $(e_v,v) \in H^*$ or the variable $x_v$, which in turn means that $(e_v,v) \in \MC{C}^{(E \times V)}$.
Therefore $(e_v,v) \in H$ in both cases.
It follows that $h_{e_v,v} > 0$ and $v \in \big(e_v\big)^{\OP{actv}}_h$.

\medskip

Below we prove the second part of this lemma.
Since the constraint to which $e_v$ corresponds is already occupied by $\pi(x_v)$ (in the mapping $\sigma$), it follows that, the remaining variables of $X \cup H$ that appear in constraint $e_v$ must be mapped to constraints other than $e_v$.
In particular, this statement holds for any $u \in \big( e_v \big)^{\OP{actv}}_h \setminus \MC{I}$.
If such a vertex $u$ exists, the variable $h_{e_v,u}$ would have been contained in $H$ since $\tilde{\MC{C}}^{(E)} = E[\MC{I}]$ and $u \in \big( e_v \big)^{\OP{actv}}_h$.
Since $\MC{I}$ is not supporting, it follows that $h_{e_v,u}$ involves in at most two constraints, i.e., $e_v$ and $u$.
Since $e_v$ is already mapped by $\pi(x_v)$, it follows that $u$ must be included in $\tilde{\MC{C}}^{(V)}$ for $h_{e_v,u}$ to be mapped to.
This shows that $\sigma(h_{e_v,u}) = u$.
Figure~\ref{fig-mapping-on-edge}(b) illustrates this argument.
\end{proof}

\newpage

%
\subsection{The mapping $\Gamma$}
\label{sec-separation-mapping-gamma}
For each $v \in \MC{I}$, define $\Gamma(v) := \big( (\sigma \circ \pi)(x_v) \big)^{\OP{actv}}_h$.
Below we show that $\Gamma$ certifies the statement of Lemma~\ref{thm-structural-mapping}.

\begin{enumerate}
	\item
		(Reflexive) 
		
		This follows directly from Lemma~\ref{lemma-properties-active-ends-mapped-edges} and the definition of $\Gamma$.
		
	\item
		(The image)

		Consider a vertex $v \in \MC{I}$.
		Since $v \in \big( (\sigma \circ \pi)(x_v) \big)^{\OP{actv}}_h = \Gamma(v)$ by the reflexive property, it follows that 
		\begin{equation}
		\Gamma(v) \in E^{\OP{actv}}_h[v] \subseteq E^{\OP{actv}}_h[\MC{I}]. \label{lemma-separation-mapping-gamma-image-1}
		\end{equation}

		For any $u \in \MC{D}$, from Lemma~\ref{claim-mapping-to-edges} we know that $\sigma(x_u) = u$.
		Therefore, by Lemma~\ref{lemma-properties-active-ends-mapped-edges}, $u$ cannot belong to $\Gamma(v)$ since $u \notin \MC{I}$ and since the constraint $u$ is already occupied by $x_u$ in the mapping $\sigma$.
		This implies that $\Gamma(v) \notin E^{\OP{actv}}_h[u]$.
		Since this holds for all $u \in \MC{D}$, we have $\Gamma(v) \notin E^{\OP{actv}}_h[\MC{D}]$.
		Combined with (\ref{lemma-separation-mapping-gamma-image-1}), it follows that $$\Gamma(v) \in E^{\OP{actv}}_h[\MC{I}] \setminus E^{\OP{actv}}_h[\MC{D}].$$
		
	\item
		(Closed under intersection.)
		
		Consider any $u, v \in \MC{I}$ such that $u \neq v$.
		For any $w \in \Gamma(u) \cap \Gamma(v)$, if $w \notin \MC{I}$, then it follows from Lemma~\ref{lemma-properties-active-ends-mapped-edges} that $\sigma(h_{(\sigma \circ \pi)(x_u),w}) = w$ and $\sigma(h_{(\sigma \circ \pi)(x_v),w}) = w$, a contradiction to the injective property of $\sigma$.
		Therefore $\Gamma(u) \cap \Gamma(v) \subseteq \MC{I}$.
\end{enumerate}

This proves Lemma~\ref{thm-structural-mapping}.

\end{appendix}

\end{document}